\documentclass[preprint,12pt]{elsarticle}

\usepackage{amssymb}
\usepackage{amsthm}
\usepackage{amsmath}
\usepackage{algorithm}
\usepackage{algpseudocode}
\usepackage{bm}
\usepackage{subcaption}
\usepackage{graphicx}
\usepackage{multirow}
\newtheorem{theorem}{Theorem}
\theoremstyle{plain}
\newtheorem{definition}{Definition}

\usepackage{url}
\theoremstyle{remark}
\newtheorem*{remark}{Remark}
\newtheorem{strategy}{Strategy}
\usepackage{etoolbox}
\usepackage{hyperref} % put this in the preamble
\makeatletter
\newif\ifblindreview
\blindreviewfalse     % <-- camera-ready / non-blind
\patchcmd{\pprintMaketitle}
  {\footnotesize\itshape\elsaddress\par\vskip36pt}
  {\footnotesize\itshape\elsaddress\par\vskip36pt}{}{}
\begin{document}

\begin{frontmatter}

%% Title, authors and addresses

%% use the tnoteref command within \title for footnotes;
%% use the tnotetext command for theassociated footnote;
%% use the fnref command within \author or \address for footnotes;
%% use the fntext command for theassociated footnote;
%% use the corref command within \author for corresponding author footnotes;
%% use the cortext command for theassociated footnote;
%% use the ead command for the email address,
%% and the form \ead[url] for the home page:
%% \title{Title\tnoteref{label1}}
%% \tnotetext[label1]{}
%% \author{Name\corref{cor1}\fnref{label2}}
%% \ead{email address}
%% \ead[url]{home page}
%% \fntext[label2]{}
%% \cortext[cor1]{}
%% \affiliation{organization={},
%%             addressline={},
%%             city={},
%%             postcode={},
%%             state={},
%%             country={}}
%% \fntext[label3]{}

\title{Model-Agnostic Energy Throughput Control for Range and Lifetime Extension of Electric Vehicles via Cell-Level Inverters}

\ifblindreview
  \author{Anonymous Author(s)}
  \affiliation{organization={Anonymous Institution}}
\else
\author[inst1]{Shida Jiang}
\author[inst1,inst2]{Shengyu Tao\corref{cor1}}
\author[inst3]{Vincent Molina}
\author[inst1]{Junzhe Shi}
\author[inst1]{Scott Moura}

\affiliation[inst1]{organization={Department of Civil and Environmental Engineering},
            addressline={University of California, Berkeley},
            city={Berkeley},
            postcode={94720},
            state={CA},
            country={USA}}
\affiliation[inst2]{organization={Department of Electrical Engineering},
            addressline={Chalmers University of Technology},
            city={Gothenburg},
            postcode={41296},
            country={Sweden}}
\affiliation[inst3]{organization={BMW Group Technology Office USA},
            addressline={2606 Bayshore Pkwy},
            city={Mountain View},
            postcode={94043},
            state={CA},
            country={USA}}

\cortext[cor1]{Corresponding author. Telephone: +460317721689.}
\fi
\begin{abstract}
%% Text of abstract
A conventional electric vehicle (EV) powertrain relies on a centralized high-voltage DC–AC inverter, thereby limiting cell-level control and potentially reducing overall driving range and battery lifetime. This paper studies an H-bridge-based cell-level inverter topology that performs power conversion at the cell level, enabling independent control of individual cells and expanding the design space for battery management. Leveraging these additional degrees of freedom, we propose a model-agnostic energy-throughput control strategy that extends EV range while improving battery-pack lifetime.
Because usable energy (and thus driving range) and lifetime are governed by the cells with the lowest state-of-charge (SOC) and state-of-health (SOH), respectively, the proposed controller preferentially routes energy throughput to healthier cells. Specifically, during charging, it permits cell SOCs to diverge to promote SOH equalization; during discharging, it rebalances SOC to maximize usable capacity under per-cell constraints. The proposed SOC–SOH-aware control strategy is evaluated on two aging models representing lithium manganese oxide and lithium iron phosphate chemistries, using a Tesla Model 3 charge–discharge profile across 14 different parameter settings. Simulations show a 7--38\% improvement in lifetime relative to a conventional SOC-only balancing baseline. More broadly, the results suggest a software-defined pathway to extend EV pack life through routine charging, with minimal reliance on specific degradation models or discharge profiles.

\end{abstract}

%%Graphical abstract
%\begin{graphicalabstract}
%\includegraphics{grabs}
%\end{graphicalabstract}

%%Research highlights
\begin{highlights}
\item Energy Throughput control to prolong battery pack life without sacrificing range
\item No prior knowledge required of detailed degradation models or drive profiles
\item Optimization-based framework for feedback control of multiple cell-level inverters
\item A 7–38\,\% lifetime extension compared with the existing control strategy 
\end{highlights}

\begin{keyword}
%% keywords here, in the form: keyword \sep keyword
Battery management system \sep State-of-charge\sep State-of-health\sep Electric vehicle\sep Cascaded H-bridge\sep Level-shifted pulse-width modulation
%% PACS codes here, in the form: \PACS code \sep code

%% MSC codes here, in the form: \MSC code \sep code
%% or \MSC[2008] code \sep code (2000 is the default)
\end{keyword}
\end{frontmatter}

%% \linenumbers

%% main text
\section{Introduction}
%background
The transportation sector is the largest source of greenhouse gas emissions in the United States, contributing about 27.88\,\% of the total emissions in 2025, according to Climate TRACE \cite{climatetrace}. EVs, which produce zero tailpipe emissions, are therefore receiving increasing attention as a pathway to reduce fossil-fuel use and greenhouse gas emissions.

In an EV powertrain, power electronic converters are required to convert the battery pack's direct current (DC) output into a three-phase alternating current (AC) supply for the traction motor to meet voltage and frequency requirements (and thus speed and torque). Conventionally, as shown in Fig.~\ref{comparison_topology}(a), a high-power DC--AC inverter realizes this conversion \cite{topology2, topology}. However, this traditional topology can face limitations such as motor-current total harmonic distortion (THD) \cite{THD}, efficiency penalties \cite{efficiency}, and a bulky DC-link capacitor \cite{size}. Furthermore, in conventional topologies, the pack is typically used as a whole, and the usable capacity is often constrained by the ``weakest cell'' in the series string \cite{imbalance}. To address these issues, Kandasamy et al.~\cite{topo} proposed a cell-level inverter powertrain. As shown in Fig.~\ref{comparison_topology}(b), the new topology delegates DC--AC conversion to the cell level. These modular units (e.g., H-bridges \cite{topo}, half H-bridges \cite{halfH}, or similar variants) are then connected to the motor. Compared to the conventional topology, the cell-level architecture introduces substantially more degrees of freedom, enabling cells to be charged and discharged at different rates. This distributed power source can reduce motor-current THD and motor losses \cite{THDnew}. The literature also suggests that, when size and weight are translated into cost, the topology can reduce system cost relative to a conventional three-phase, two-level inverter \cite{price}. Additionally, with appropriate SOC balancing, the topology can better utilize residual energy near the end of discharge, thereby extending the EV range.

\begin{figure}[htbp]
\centering
\includegraphics[width=13.5cm]{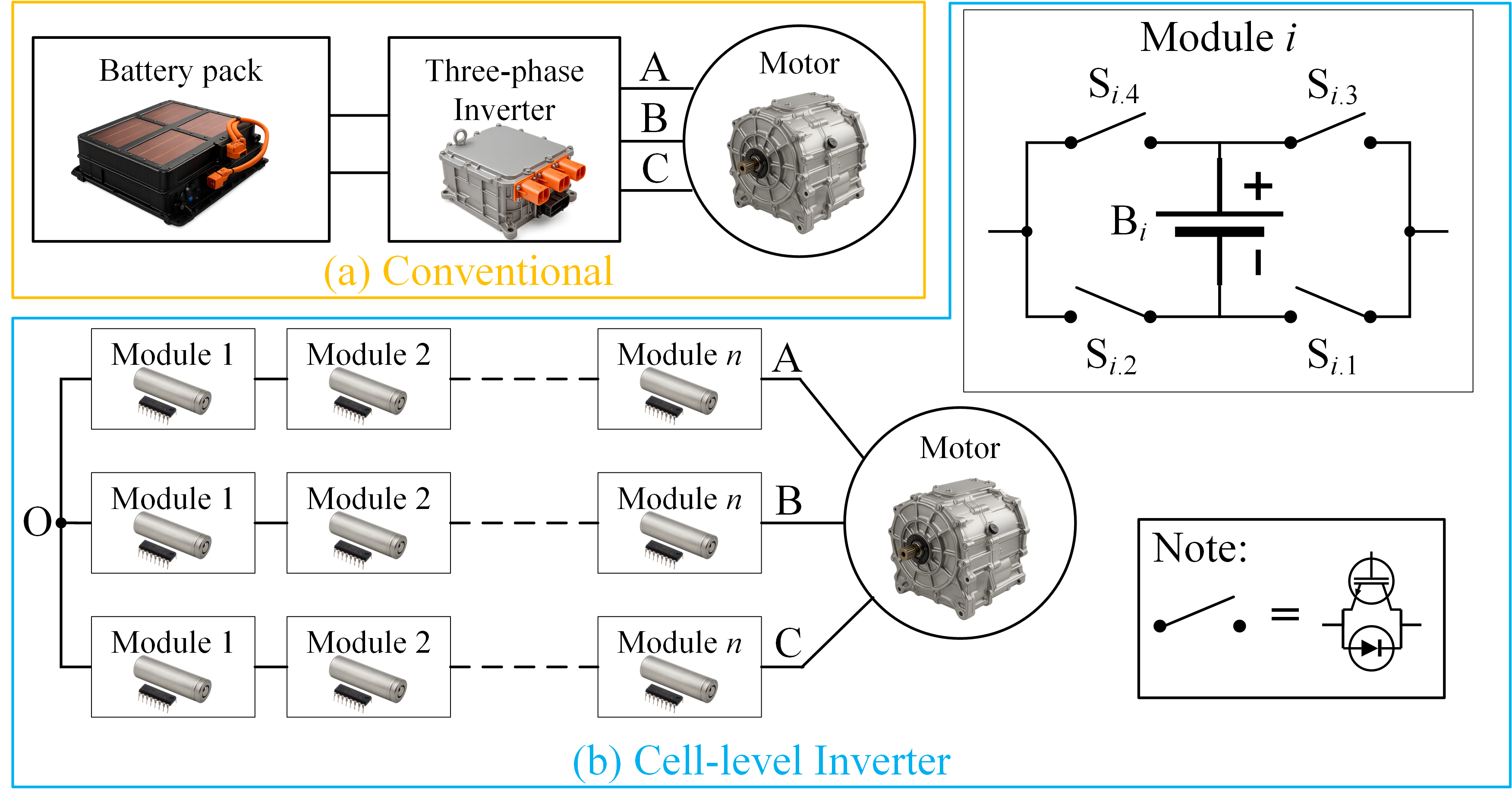}
\caption{Comparison between the (a) conventional inverter powertrain and (b) the H-bridge-based cell-level inverter powertrain. The new topology introduces new degrees of freedom for range and lifetime extension.}
\label{comparison_topology}
\end{figure}

Several studies have investigated SOC balancing strategies for H-bridge-based cell-level inverter topologies. In \cite{topo}, Kandasamy et al. proposed an SOC balancing approach for a three-phase cell-level inverter in EV applications. Their core idea is to prioritize discharging phases and cells with higher SOC while bypassing faulty cells. A key limitation is that the method presumes accurate real-time SOC estimates for every cell, which can impose a substantial computational burden. This requirement is alleviated in \cite{halfH}, where Wu et al. introduced a voltage–SOC balancing strategy that infers SOC differences without performing full SOC estimation for each cell. For charging applications, Gemma et al. \cite{DC_charging1, DC_charging2} developed a constant-current–constant-voltage (CCCV) strategy for a cell-level inverter connected to a DC charger. The strategy performs SOC balancing prior to entering the constant-voltage (CV) stage, after which all cells are charged together during CV. A potential drawback is that balancing is only active during constant-current (CC) charging, so SOC mismatch can re-emerge (or grow) by the end of the CV stage.

Although SOC balancing increases usable energy in EV operation, it is less effective at extending battery-pack lifetime. Experimental evidence shows that lithium-ion cells can age at markedly different rates under identical aging profiles \cite{age1,age2}, particularly in the latter portion of life. In an EV pack, nonuniform thermal conditions further amplify differences in cell-to-cell degradation \cite{heat}. In conventional EV powertrains, pack-level capacity delivery is constrained by the weakest cell \cite{SOHweak}; consequently, the pack reaches end of life (EOL) as soon as any cell reaches EOL. The cell-level inverter architecture mitigates this issue by allowing EOL cells to be bypassed without interrupting normal pack operation. Nevertheless, once failed cells accumulate beyond a threshold (e.g., 10,\% of the total cell count), the pack must still be replaced even if many remaining cells retain substantial health. This leads to under-utilization of available battery life. To the best of our knowledge, however, few studies have developed control strategies for cell-level inverter topologies with the explicit goal of prolonging battery-pack lifetime.

Building on the concept of SOC balancing, battery-pack lifetime can theoretically be extended through SOH balancing, i.e., prioritizing charging and discharging the less-degraded cells. For example, Rehman et al. \cite{SOHbalancing1} proposed assigning different SOC operating limits to cells with different SOH, so that more-degraded cells are cycled less frequently, and overall pack aging is mitigated. However, despite its conceptual appeal, the practical value of this strategy is questionable: it conflicts with SOC balancing and reduces the vehicle’s achievable range. Specifically, SOH balancing tends to produce a highly non-uniform SOC distribution across cells at the end of discharge, leaving usable energy stranded in the pack and thus sacrificing EV range—an outcome that is generally unacceptable in practice.

In this paper, we propose a strategy that prolongs battery life without sacrificing usable capacity (and thus EV range). This objective is particularly challenging because the current must be distributed among cells with fine time resolution, yielding an enormous number of decision variables without further simplification. Moreover, neither the subsequent discharge trajectory nor a reliable battery aging model is assumed to be known. In general, the main contributions of this paper are:
\begin{itemize}
    \item We propose a new SOC–SOH-aware control strategy that extends EV range and battery life by dynamically controlling each cell's Ah throughput. The strategy can extend battery lifetime by about 7--38\,\% across 14 different use cases and two battery chemistries. 
    \item We propose an optimization algorithm and a closed-loop charging and discharging strategy that resolves the contradiction between longer range and longer battery life. 
    \item The proposed strategy does not require prior knowledge about the battery aging model or drive profiles. The control strategy can, therefore, be utilized in a wide range of applications. 
\end{itemize}

The remainder of the paper is organized as follows. Section 2 introduces some key definitions related to the health and remaining capacity of the battery pack. Section 3 introduces the operation principle of the cell-level inverter topology and the feasible region of its control strategy. Section 4 presents our control strategy, which can effectively extend the lifetime of the battery pack without sacrificing the range of the EV. Section 5 presents a comprehensive simulation study to validate the effectiveness of the control strategy. Finally, the conclusions are drawn in Section 6.

\section{Key Definitions}
\subsection{Cell State-of-charge and State-of-health}
In this paper, the remaining capacity and health of cells are defined by two states: SOC and SOH. As described in the introduction, SOC describes the normalized remaining capacity of a cell. Namely,
\begin{equation}\label{SOC}
    SOC=\frac{Q_r}{Q_{max}}\cdot 100\,\%
\end{equation}
where $Q_r$ is the remaining capacity of the cell, and $Q_{max}$ is the present maximum capacity of the cell. SOH describes the normalized present maximum capacity of a cell. According to \cite{SOH},
\begin{equation}\label{cellSOH}
    SOH=\frac{Q_{max}}{Q_{no}}\cdot 100\,\%
\end{equation}
where $Q_{no}$ is the nominal capacity of the cell. 

%%%%
\subsection{End-of-life, Pack SOC, and Pack SOH}\label{end_pack}

Since the main objective of this paper is to extend battery-pack lifetime, we must define battery EOL. In most studies, EOL is defined based on the SOH: when a cell's SOH drops to approximately 70--80\,\%, it is considered to have reached EOL for EV usage \cite{EOL_cell}. However, as noted by Martinez-Laserna et al.~\cite{EOL_pack}, many cells from retired EV packs do not meet this criterion. Indeed, Saxena et al.~showed that most daily driving needs of U.S. drivers can still be satisfied even when SOH is well below 70\,\% \cite{doubt1}. Similar observations were reported in \cite{doubt2,doubt3}, suggesting that a universal 70--80\,\% SOH threshold may be overly conservative for EV applications.

A similar debate exists outside academia. According to the EV battery test procedures manual published by the United States Advanced Battery Consortium (USABC) in 1996 \cite{USABC1996}, a battery or module reaches EOL when either (i) its net delivered capacity is less than 80\,\% of its rated capacity, or (ii) its peak power capability is less than 80\,\% of the rated power at 80\,\% depth of discharge (DOD). However, this percentage-based criterion was removed in the 2015 revision \cite{USABC2015}. In the latest manual, EOL is primarily tied to available capacity and energy density, implying that packs with higher initial energy density may retire at lower SOH. In addition, California Air Resources Board (CARB) regulations require EV manufacturers to prevent battery capacity from deteriorating below 70\,\% during a warranty period of eight years or 100{,}000 miles, whichever occurs first \cite{CARB}. CARB also includes a durability requirement that at least 70\,\% of vehicles in a test group retain at least 70\,\% of the certified range for a useful life of 10 years or 150{,}000 miles, whichever occurs first \cite{CARB3}. While these requirements do not explicitly define pack EOL, they imply that EV battery packs may remain in service until SOH approaches 70\,\%.

Given the considerations above, this paper evaluates multiple thresholds for battery-pack EOL (denoted \(SOH_{\text{EOL}}\)) and discusses their effects on the proposed strategy in Section~\ref{sec_sensitive}. By default, we assume \(SOH_{\text{EOL}}=70\,\%\), motivated by the conventional EOL definition, CARB-related provisions \cite{CARB,CARB3}, and evidence that both capacity fade and resistance growth can accelerate below \(\sim 70\,\%\) SOH \cite{death1,death2}. Similarly, we assume that an individual cell reaches EOL when its SOH, defined in \eqref{cellSOH}, reaches \(SOH_{\text{EOL}}\). Meanwhile, following the definition adopted by CARB \cite{CARB2}, the pack SOH is defined as
\begin{equation}\label{SOHpack}
    SOH_{\text{pack}}=\frac{Q_{\max,\text{pack}}}{Q_{\text{no},\text{pack}}}\cdot 100\,\%,
\end{equation}
where \(Q_{\max,\text{pack}}\) is the maximum usable discharge capacity and \(Q_{\text{no},\text{pack}}\) is the nominal pack capacity (equal to the sum of nominal cell capacities).
For the conventional topology in Fig.~\ref{comparison_topology}(a), \(SOH_{\text{pack}}\) equals the minimum SOH among all cells, since usable pack capacity is limited by the lowest-capacity cell \cite{imbalance}. For the cell-level inverter topology in Fig.~\ref{comparison_topology}(b), when no cell has reached EOL and nominal capacities are identical, \(SOH_{\text{pack}}\) equals the average cell SOH, because the topology can utilize each cell's remaining capacity. Once any cell reaches EOL and is bypassed during discharge, \(Q_{\max,\text{pack}}\) excludes that cell while \(Q_{\text{no},\text{pack}}\) remains unchanged, so \(SOH_{\text{pack}}\) becomes lower than the average SOH. For example, let \(SOH_{\text{EOL}}=70\,\%\) and consider three cells with SOH values 90\,\%, 80\,\%, and 70\,\%. The third cell is at EOL and is bypassed, so \(SOH_{\text{pack}}=(90\,\%+80\,\%)/3=56.7\,\%\), whereas the average SOH across all three cells would be 80\,\%.

Finally, the battery pack SOC is defined as:
\begin{equation}\label{SOCpack}
    SOC_{pack}= \frac{Q_{r, pack}}{Q_{max, pack}}\cdot 100\,\%
\end{equation}
where $Q_{r, pack}$ is the remaining usable capacity of the battery pack and is equal to the sum of the remaining capacity of the cells.

%%%%
\section{Operating Principle and Control of the Cell-Level Inverter Topology}
\subsection{Level-shifted Pulse Width Modulation}
Existing literature has proposed many different types of control methods for the cascaded H-bridge topology shown in Figure \ref{comparison_topology}(b). The most popular ones include phase-shifted pulse width modulation (PSPWM), LSPWM, selective harmonics elimination (SHE), and space-vector pulse width modulation (SVPWM) \cite{PWM1, PWM2}. However, most of these control techniques are designed for balanced DC sources, meaning that each H-bridge is used equally in the long run. Here, however, we want to charge and discharge different cells at different rates so that their SOH can be more balanced. Therefore, the control technique we are especially interested in is LSPWM, and its key idea is illustrated in Figure \ref{LSPWM_fig}.

\begin{figure}[htbp]
\centering
\begin{subfigure}{0.48\textwidth}
\includegraphics[width=\textwidth]{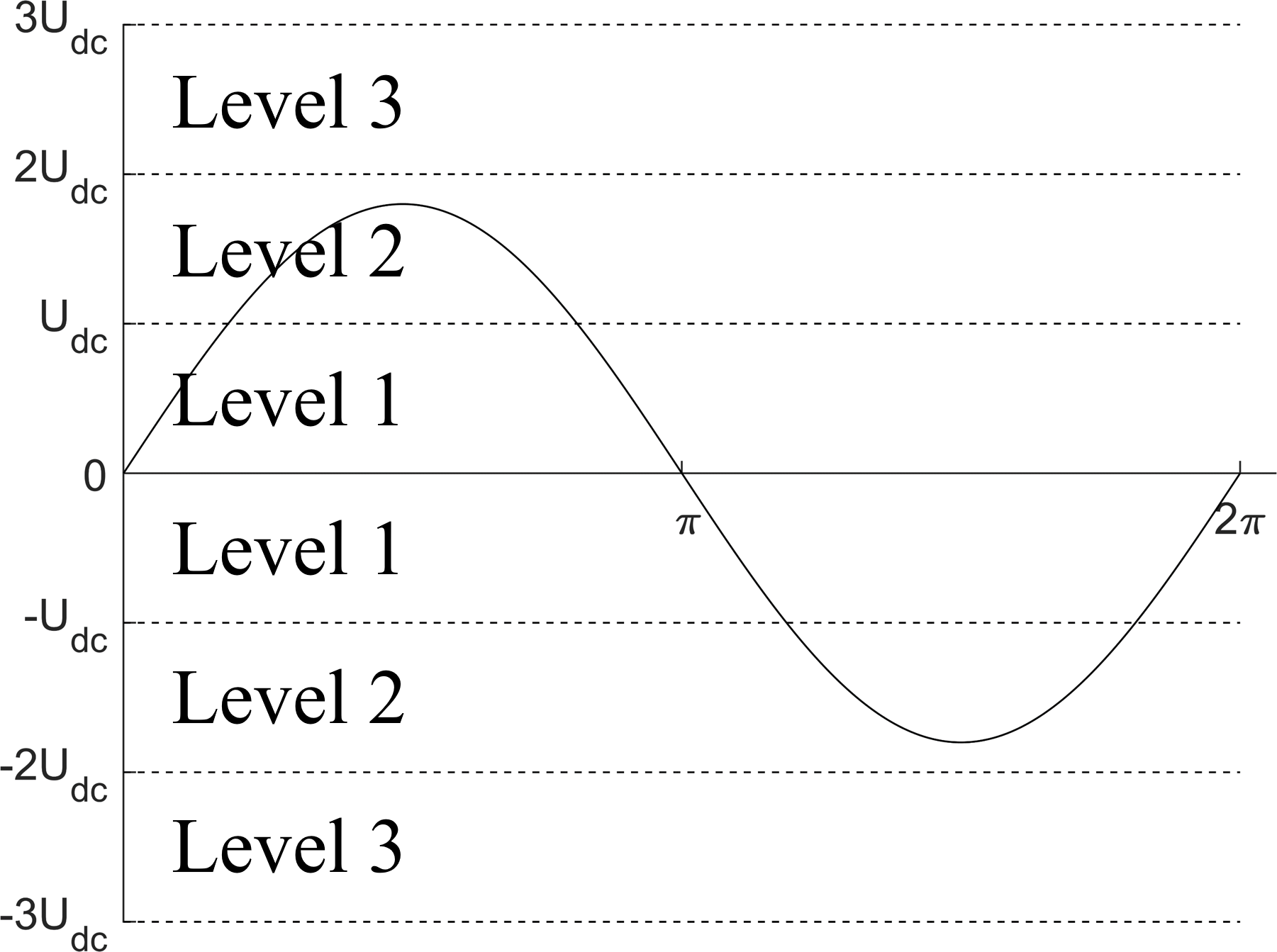}
\subcaption{Voltage level division of the modulation wave}
\end{subfigure}
\begin{subfigure}{0.48\textwidth}
\includegraphics[width=\textwidth]{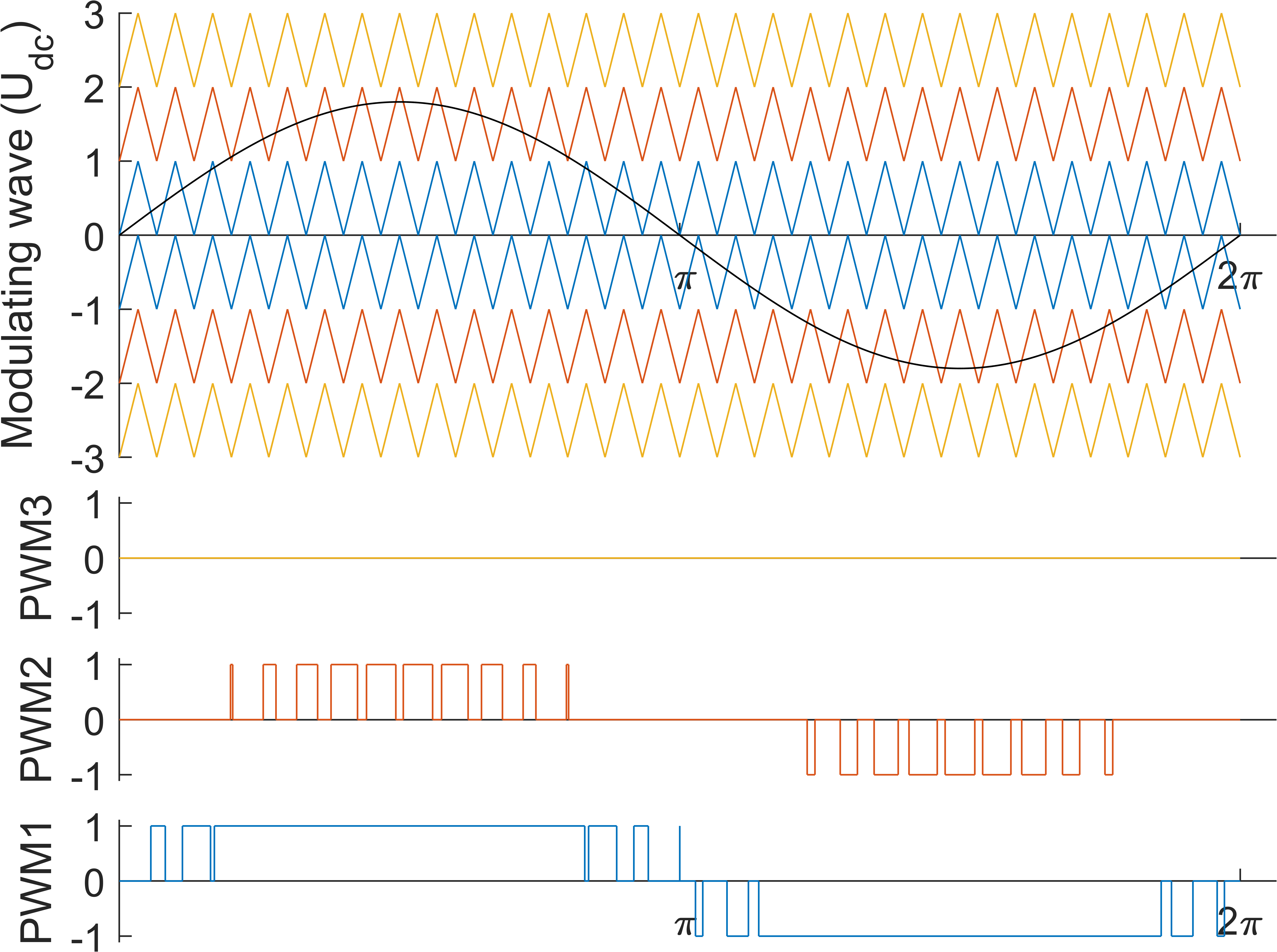}
\subcaption{PWM wave generation}
\end{subfigure}
\caption{The basic idea of level-shifted pulse width modulation control}
\label{LSPWM_fig}
\end{figure}

As shown in Fig.~\ref{LSPWM_fig}(a), the modulation wave (desired phase voltage reference) is quantized into several discrete \emph{voltage levels}. Each cell (H-bridge) is assigned to a voltage level, and the quantization step is approximately the cell/module DC voltage. For illustration, we assume all cell voltages are equal to \(U_{dc}\), although in practice they may differ slightly.

After the level assignment, each H-bridge generates a PWM waveform that tracks the duty cycle associated with its assigned level, as shown in Fig.~\ref{LSPWM_fig}(b). For a sinusoidal modulation wave, cells assigned to higher voltage levels are activated less frequently than those assigned to lower levels. In the particular example of Fig.~\ref{LSPWM_fig}(b), the cell assigned to Level~1 is used for most of the electrical cycle, whereas the cell assigned to Level~3 remains bypassed because the reference does not enter that band. The level assignment can be updated frequently, enabling either nearly equal long-term usage (if desired) or intentionally unbalanced usage to support SOH balancing.

It is also worth noting that although the output waveform (and the input waveform during AC charging) is a three-phase sinusoid, the modulation reference in Fig.~\ref{LSPWM_fig}(a) need not be purely sinusoidal. For example, third-harmonic injection and SVPWM-equivalent modulation can be combined with LSPWM to increase the achievable fundamental voltage \cite{SVPWM}. These methods shape the phase voltage while preserving the line-to-line voltage waveform. In this work, however, we focus on sinusoidal modulation for clarity; the proposed strategy can be extended to non-sinusoidal references in a straightforward manner.

Another scenario that should be mentioned is DC charging. For DC charging, the three branches of cells in Figure \ref{comparison_topology}(b) can either be connected in parallel or in series, depending on the charger's voltage and current limit. Note that this flexibility is another advantage of the cell-level inverter topology. Whichever connection is used, the control principle is similar to Figure \ref{LSPWM_fig}, except that the modulation wave is a DC voltage with a slowly changing magnitude. In this case, almost all the voltage levels' duty cycles in Figure  \ref{LSPWM_fig}(b) will become either one or zero, and only one voltage level can have a non-binary duty cycle. Specifically, for the example shown in Figure \ref{LSPWM_fig}, if the DC voltage is between $U_{dc}$ and $2U_{dc}$, only voltage level 2 will have a non-binary duty cycle. 

\subsection{Imbalanced operation within each phase and related theoretical analysis}\label{section_greedy}

While prior literature also noted LSPWM's performance under an imbalanced power distribution \cite{LSPWM, LSPWM2}, these works typically treated imbalance as a non-ideal condition and rarely deliberately exploited it. Here, we embrace and leverage LSPWM's ability to distribute throughput unevenly across cells. This raises a natural question: what is the maximum imbalance that LSPWM can realize? Equivalently, given the initial cell states, which final states are achievable under LSPWM?

Indeed, although LSPWM enables different cells to be charged at different rates, the achievable imbalance is still constrained by the duty-cycle structure of the modulation reference. In the illustrative case of Fig.~\ref{LSPWM_fig}, the first two voltage levels have nonzero duty cycles, meaning that at least two H-bridges must be active during each modulation period; thus, it is impossible to charge only one cell while charging none of the others. On the other hand, it is possible to charge Cells~1 and~2 equally while bypassing Cell~3 on average. For example, assign Cell~1 to Level~1, Cell~2 to Level~2, and Cell~3 to Level~3 during one modulation period, and swap the assignments of Cells~1 and~2 in the next period. Repeating this swapping yields equal average throughput for Cells~1 and~2 while Cell~3 remains bypassed.
In the general case, Theorem~\ref{t1} provides a necessary and sufficient condition relating the initial and final states. We first introduce the following definitions.

Let \(Q_{\text{initial},i}\) and \(Q_{\text{final},i}\) denote the initial charge capacity and the desired final charge capacity of cell \(i\), respectively, and define \(\Delta Q_i = Q_{\text{final},i}-Q_{\text{initial},i}\).
Let \(d_\ell\) denote the duty cycle of voltage level \(\ell\) in Fig.~\ref{LSPWM_fig}(b). We assume that the controller may reassign cells to voltage levels from one modulation period to the next.

\begin{definition}[Achievable]
A target final state \(\{Q_{\text{final},i}\}_{i=1}^{n_1}\) is \emph{achievable} from an initial state \(\{Q_{\text{initial},i}\}_{i=1}^{n_1}\) if there exist a time horizon \(\Delta t>0\) and an admissible level-assignment policy such that, under LSPWM operation, all cells reach \(Q_{\text{final},i}\) after \(\Delta t\).
\end{definition}

\begin{theorem}\label{t1}
Assume that (i) all cell DC voltages are equal and constant over time, so the duty-cycle pattern \(\{d_\ell\}_{\ell=1}^{n_1}\) induced by the modulation reference is constant, and (ii) the modulation period \(T_m\) is much shorter than the total charging time, i.e., \(T_m\ll \Delta t\), so that level assignments can be permuted many times over \(\Delta t\).
Let \(\Delta Q_{(1)}\ge \Delta Q_{(2)}\ge \cdots \ge \Delta Q_{(n_1)}\) be the sorted capacity gains across cells, and let \(d_{1}\ge d_{2}\ge \cdots \ge d_{n_1}\) denote the duty cycles across voltage levels (largest to smallest).
Then the target final state is achievable if and only if
\begin{equation}\label{condition}
\frac{\sum_{j=1}^{k}\Delta Q_{(j)}}{\sum_{j=1}^{n_1}\Delta Q_{(j)}}
\;\le\;
\frac{\sum_{j=1}^{k}d_{j}}{\sum_{j=1}^{n_1}d_{j}},
\qquad k=1,2,\ldots,n_1,
\end{equation}
(with equality at \(k=n_1\)).
Moreover, in LSPWM, \(d_i\) equals the fraction of time the reference exceeds the \(i\)th carrier threshold (level boundary); hence \(d_{1}\ge d_{2}\ge \cdots \ge d_{n_1}\) for any reference waveform.
\end{theorem}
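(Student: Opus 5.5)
The plan is to recast the problem as a question about majorization and doubly stochastic matrices, and then apply the Birkhoff--von Neumann theorem.

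\textbf{Model.} During one modulation period, a fixed assignment is a permutation \(\sigma\) that places cell \(i\) on level \(\sigma(i)\). Under assumption (i), cell \(i\) then receives charge proportional to \(d_{\sigma(i)}T_m\), with a common constant \(c\) (current times period). Let \(p_\sigma\) be the fraction of the \(\Delta t/T_m\) periods in which \(\sigma\) is used. The accumulated gain is
\[
\Delta Q_i = c\,\frac{\Delta t}{T_m}\sum_{\sigma}p_\sigma d_{\sigma(i)} .
\]
Put \(P=\sum_\sigma p_\sigma \Pi_\sigma\). Then \(\Delta Q\) is proportional to \(Pd\), and \(P\) is doubly stochastic. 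By Birkhoff--von Neumann, every doubly stochastic \(P\) arises this way. Assumption (ii), \(T_m\ll\Delta t\), lets us approximate any rational, and in the limit any real, weights \(p_\sigma\). Hence a target is achievable if and only if the normalized vector \(x=\Delta Q/\sum_j\Delta Q_j\) equals \(P\,(d/\sum_j d_j)\) for some doubly stochastic \(P\). The overall scale is free, since \(\Delta t\) or the current magnitude can be chosen.

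\textbf{Both directions.} By the Hardy--Littlewood--P\'olya theorem, \(x=Py\) for some doubly stochastic \(P\) if and only if \(x\) is majorized by \(y\). Majorization means exactly the partial-sum inequalities \eqref{condition} on sorted entries, with equality at \(k=n_1\); that equality holds automatically after normalization. I would still include direct arguments for self-containment.

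For necessity, take any \(k\) cells. In every period they occupy \(k\) distinct levels, so their combined share is at most the sum of the \(k\) largest duty cycles. Averaging over periods yields \eqref{condition}.

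For sufficiency, I would either cite HLP or give the constructive route: a majorized vector is reached from \(y\) by finitely many T-transforms (Robin Hood transfers) between two coordinates. Each T-transform \(\lambda I+(1-\lambda)\Pi_{(ab)}\) is implemented by swapping two cells' levels a fraction \(1-\lambda\) of the time, exactly like the Cell~1/Cell~2 example above. Composing these gives the required schedule.

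\textbf{Expected obstacles.} The delicate part is the sign convention. Condition \eqref{condition} uses normalized gains, so all \(\Delta Q_i\) must share the sign of the operation, whether charging or discharging, with \(\sum\Delta Q\ne 0\). Zero entries (bypassed cells) are allowed. I would state this explicitly and handle discharging by replacing \(\Delta Q\) with \(-\Delta Q\).

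A second point is the approximation from assumption (ii): the weights must be rational multiples of \(T_m/\Delta t\). I would handle this by taking \(\Delta t\) to be a suitable multiple of \(T_m\), or by accepting \(O(T_m/\Delta t)\) error.

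\textbf{Final claim.} The ordering \(d_1\ge\cdots\ge d_{n_1}\) is immediate. Level \(i\)'s duty cycle is the time fraction in which the reference exceeds the \(i\)th threshold, and these thresholds increase with \(i\). The superlevel sets are therefore nested, so their measures decrease.
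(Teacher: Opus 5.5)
Your proposal is correct, and it reaches sufficiency by a genuinely different route from the paper.

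\textbf{Necessity.} This half is the same in both: any $k$ cells occupy $k$ distinct levels in every period, so their combined share is bounded by the top-$k$ duty share.

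\textbf{The paper's sufficiency argument.} The paper uses no majorization theory. It analyzes the closed-loop greedy rule that later becomes Strategy~\ref{greedy1}: in each period, the cell with the $j$th largest remaining demand $\Delta Q_i^r(t)$ is placed on level $j$. It argues that the ordering of remaining demands never flips. It then argues by contradiction. If $k$ cells still had positive demand at the end, they were the top-$k$ cells throughout. They were therefore always on levels $1,\dots,k$ and received exactly the top-$k$ duty share. Together with their unmet demand, this would violate \eqref{condition}.

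\textbf{Your sufficiency argument.} You identify the time-averaged allocations with $\{Pd\}$ over doubly stochastic $P$. That set is the convex hull of the permutations of $d$. Hardy--Littlewood--P\'olya plus Birkhoff--von Neumann then settle both directions at once.

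\textbf{What your route buys.} It rests on classical theorems rather than the paper's continuous-time no-crossing argument, which is heuristic about discrete-period overshoot and ties. It makes the convexity of the achievable set transparent, which is why the stage-wise constraint \eqref{strong} can be written as an LP via top-$k$ sums. It also makes the sign convention and the rational-approximation issue explicit, where the paper leaves them implicit.

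\textbf{What the paper's route buys.} Its witness is a state-feedback law that needs no precomputed Birkhoff decomposition. It is exactly the controller used online, it self-corrects under deviations from the constant-duty idealization, and it supports the ``as close as possible when infeasible'' interpretation. Your schedule is open-loop.

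\textbf{One point of phrasing in the T-transform variant.} ``Composing'' must mean expanding the product $T_r\cdots T_1$ into a convex combination of permutations, i.e., nested relabeling of cells. It cannot mean running the individual swap schedules one after another in time. Each such schedule acts on $d$ itself, so running them sequentially yields $\sum_s \alpha_s T_s d$ rather than $T_r\cdots T_1 d$.
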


\begin{proof}
We first prove necessity of \eqref{condition}. Fix any \(k\in\{1,\ldots,n_1\}\) and consider any set of \(k\) cells. Under Assumption~(i), over a fixed charging horizon \(\Delta t\), the total added charge (Ah) delivered by voltage level \(\ell\) is proportional to its duty cycle \(d_\ell\); hence the maximum fraction of total added charge that any \(k\) cells can receive is achieved when they are assigned to the \(k\) highest-duty levels \(\{1,\ldots,k\}\) for the entire charging duration. Therefore, for any feasible strategy, the fraction of total added charge received by the \(k\) most-charged cells cannot exceed \(\frac{\sum_{\ell=1}^k d_\ell}{\sum_{\ell=1}^{n_1} d_\ell}\). Since \(\Delta Q_{(1)}\ge\cdots\ge \Delta Q_{(n_1)}\) are the ordered capacity gains, this yields \eqref{condition}. Thus \eqref{condition} is necessary.

We next prove sufficiency by constructing a specific level-assignment policy and showing (by contradiction) that it achieves the target whenever \eqref{condition} holds. Consider the following level-assignment policy during charging. Let \(\Delta Q_i^r(t)\) denote the remaining charge needed by cell \(i\) at time \(t\), i.e., \(\Delta Q_i^r(t)=Q_{\text{final},i}-Q_i(t)\). At each modulation period, sort the cells by \(\Delta Q_i^r(t)\) in nonincreasing order and assign the cell with the \(j\)th largest \(\Delta Q_i^r(t)\) to voltage level \(j\). (If ties occur, preserve ties by cyclically rotating level assignments among tied cells.)

Under this policy, the ordering of the remaining demands is invariant: if \(\Delta Q_i^r(t_0)\ge \Delta Q_j^r(t_0)\) at some time \(t_0\), then \(\Delta Q_i^r(t)\ge \Delta Q_j^r(t)\) for all \(t\ge t_0\). To see this, suppose for contradiction that there exists a first time \(t^\star\) such that \(\Delta Q_i^r(t^\star)<\Delta Q_j^r(t^\star)\). By continuity and minimality of \(t^\star\), we must have \(\Delta Q_i^r(t^\star)=\Delta Q_j^r(t^\star)\), and immediately before \(t^\star\), \(\Delta Q_i^r>\Delta Q_j^r\). But then the policy assigns cell \(i\) to a level with duty no smaller than that of cell \(j\) (since higher remaining demand gets a higher-duty level), so cell \(i\) cannot decrease its remaining demand faster than cell \(j\) at the crossing instant, contradicting the existence of such a first crossing.

Now assume \eqref{condition} holds but, at the end of charging, not all targets are met; i.e., there exists at least one cell with \(\Delta Q_i^r(\Delta t)>0\). Let \(k\in\{1,\ldots,n_1-1\}\) be the number of cells with strictly positive remaining demand at time \(\Delta t\). By the ordering invariance above, these \(k\) cells must have been the top-\(k\) cells in remaining demand throughout the charging process, and therefore they were always assigned to the \(k\) highest-duty levels \(\{1,\ldots,k\}\). Consequently, the maximum fraction of total delivered charge available to these \(k\) cells is \(\frac{\sum_{\ell=1}^k d_\ell}{\sum_{\ell=1}^{n_1} d_\ell}\). However, since these \(k\) cells still require additional charge at the end, the required fraction of total charge for the top-\(k\) capacity gains must satisfy
\[
\frac{\sum_{j=1}^k \Delta Q_{(j)}}{\sum_{j=1}^{n_1}\Delta Q_{(j)}} >
\frac{\sum_{\ell=1}^k d_\ell}{\sum_{\ell=1}^{n_1} d_\ell},
\]
which contradicts \eqref{condition}. Therefore, all cells must satisfy \(\Delta Q_i^r(\Delta t)=0\), and the desired final state is achievable. This proves sufficiency.
\end{proof}

\begin{remark}
Condition \eqref{condition} characterizes achievable final states only when the duty cycles \(\{d_\ell\}\) are constant over time. In practice, the duty cycles depend on the cells' terminal voltage and the amplitude of the phase voltage, both of which can vary as the charging progresses. Nevertheless, while Theorem~\ref{t1} does not guarantee achievability under general voltage variations, the charging strategy we mentioned in the sufficiency proof of Theorem \ref{t1} is still very useful, as we explain below. 
\end{remark}

\begin{strategy}[remaining capacity gain balancing]\label{greedy1}
    At each modulation period, assigns the cell with the \(j\)th largest \(\Delta Q_i^r(t)\) to voltage level \(j\). If ties occur, preserve ties by cyclically rotating level assignments among tied cells.
\end{strategy}

This charging strategy can be interpreted as a greedy allocation rule: at each modulation period it assigns the highest-throughput voltage levels to the cells with the largest remaining required capacity gains \(\Delta Q_i^r(t)\), which tends to reduce the dispersion of \(\{\Delta Q_i^r(t)\}\) while driving all cells toward their targets. When cell terminal voltages are equal and approximately constant, the duty-cycle pattern \(\{d_\ell(t)\}\) depends only on the phase-voltage reference and is independent of the cells' state trajectory. In this case, Strategy~\ref{greedy1} is optimal in the following sense: if the target final state is achievable, the strategy attains it (as shown in the sufficiency proof of Theorem~\ref{t1}); if the target is not achievable, the strategy produces a final state that is as close as possible to the target within the limits imposed by the duty-cycle shares (i.e., it does not ``waste'' high-duty levels on cells with smaller remaining demand).

\begin{strategy}[remaining-capacity balancing]\label{greedy2}
At each modulation period, if the battery pack is discharging, assign the cell with the \(j\)th largest \(\Delta Q_i^r(t)\) to voltage level \(j\); if the battery pack is charging (e.g., in regenerative mode), assign the cell with the \(j\)th smallest \(\Delta Q_i^r(t)\) to voltage level \(j\). If ties occur, maintain ties by cyclically rotating level assignments among tied cells.
\end{strategy}

\begin{remark}
Strategy~\ref{greedy2} can be regarded as a special case of Strategy~\ref{greedy1} in which the desired final state is complete depletion of the usable pack capacity. Under the constant-terminal-voltage approximation, the strategy allocates higher Ah throughput to cells with larger remaining usable capacity during discharge, thereby minimizing residual unusable capacity at pack depletion and maximizing extractable energy (and thus range). In the general case, it remains an effective and computationally light heuristic, consistent with the intuition that lower-capacity cells should be discharged more slowly to avoid prematurely limiting the pack.
\end{remark}

Strategy~\ref{greedy2} is essential for operating a cell-level inverter because it provides an effective and computationally efficient way to increase usable discharge capacity (and thus driving range). It is closely related to SOC balancing, which is presented below. 

\begin{strategy}[SOC balancing]\label{greedy3}
At each modulation period, if the battery pack is discharging, assign the cell with the \(j\)th largest SOC to voltage level \(j\); if the battery pack is charging (e.g., in regenerative mode), assign the cell with the \(j\)th smallest SOC to voltage level \(j\). If ties occur, maintain ties by cyclically rotating level assignments among tied cells.
\end{strategy}

Strategies \ref{greedy2} and \ref{greedy3} coincide when cells have identical SOH, but they differ when cells exhibit SOH heterogeneity: SOC balancing equalizes SOC, whereas remaining-capacity balancing accounts for differences in usable capacities and can therefore extract more energy under heterogeneous SOH. In practice, SOC balancing is often easier to implement because SOC can be estimated without explicitly estimating SOH (or capacity), whereas remaining-capacity balancing is most accurate when both SOC and SOH are available. Since most existing works do not target SOH balancing, SOC balancing is more commonly adopted than remaining-capacity balancing.

%In real-life driving, the battery current can be positive when the vehicle decelerates. To preserve the optimality described in Corollary \ref{l1}, the voltage level assignment must be reversed when the current is positive. In other words, the cell with the $i^{th}$ lowest remaining capacity should be assigned to the $i^{th}$ voltage level to reduce the difference in remaining capacity during charging. Nevertheless, the driver's behavior is usually unpredictable during driving, and the controller cannot know when the vehicle will decelerate in advance. Therefore, such an extension is not implemented in this paper.

\subsection{Power distribution among the three phases}

For the three-phase H-bridge-based cell-level inverter in Fig.~\ref{comparison_topology}(b), power imbalance can be introduced not only among cells within a phase but also across the three phases. For a given operating point (i.e., a commanded set of motor voltages), the line-to-line voltages \(V_{AB}(t),V_{BC}(t),V_{CA}(t)\) are balanced sinusoids with the same frequency and amplitude and phase shifts of \(120^\circ\). Their phasors \(\dot U_{AB},\dot U_{BC},\dot U_{CA}\) therefore form an equilateral triangle, as illustrated in Fig.~\ref{imbalance}(a). While these line-to-line phasors are fixed once the motor command is specified, the corresponding phase-voltage phasors \(\dot U_{OA},\dot U_{OB},\dot U_{OC}\) are not unique: the neutral (common-mode) voltage can be shifted, which changes the phase-voltage magnitudes without altering the line-to-line voltages (Fig.~\ref{imbalance}(b)). This degree of freedom enables different phases to process different amounts of power, potentially allowing cells in different phases to be discharged at different rates.

\begin{figure}[htbp]
\centering
\includegraphics[scale=0.4]{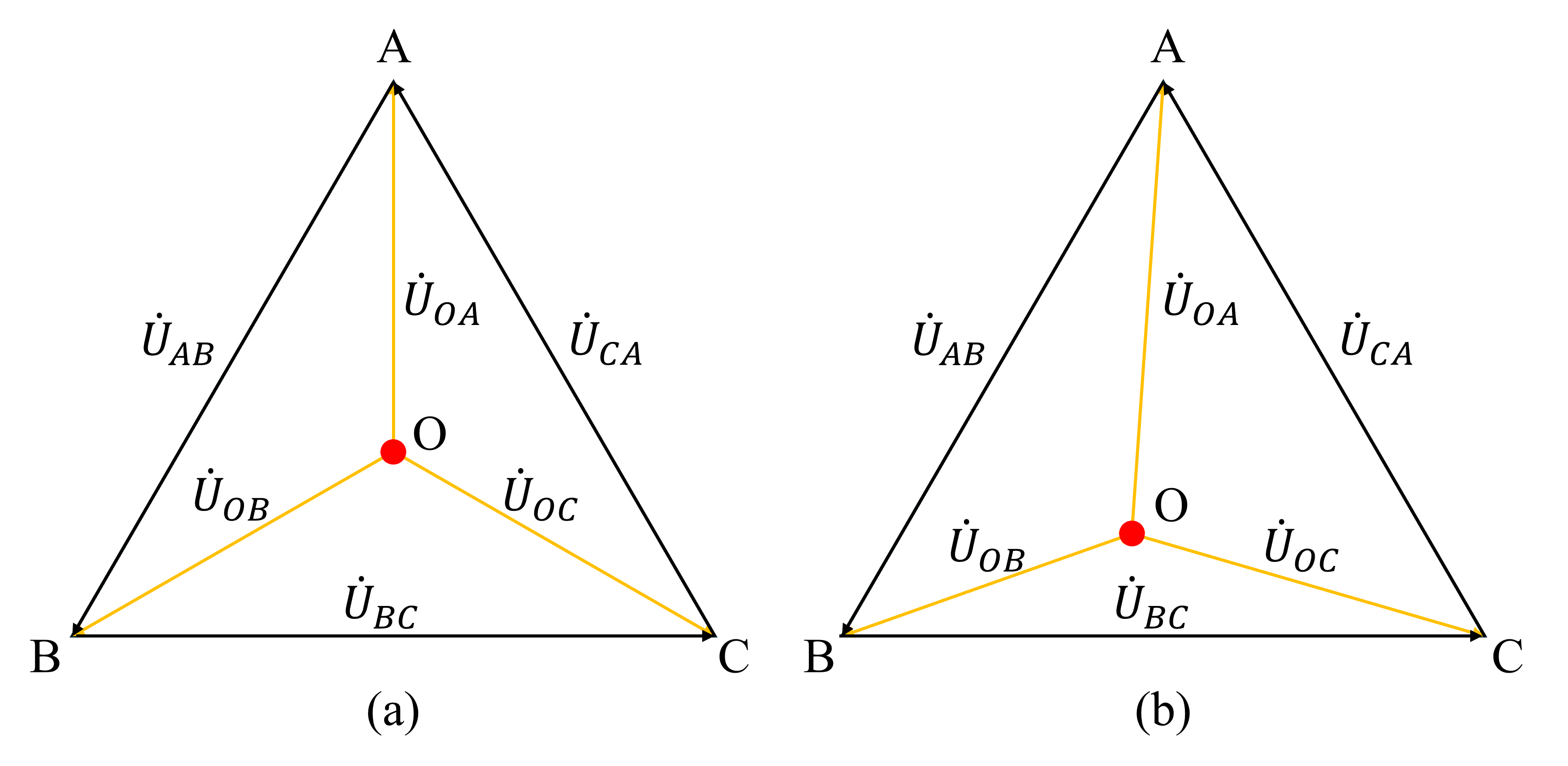}
\caption{(a) Balanced and (b) imbalanced power distributions among the three phases of the H-bridge-based cell-level inverter topology}
\label{imbalance}
\end{figure}

Although the topology permits imbalance both within and across phases, in this work we intentionally introduce imbalance only {within} each phase. The reason is that within-phase imbalance can be realized through LSPWM level assignments without changing the commanded phase voltage, and therefore does not inherently increase motor-side copper losses. In contrast, deliberate imbalance across phases increases conduction losses (and thus heat generation) for the same required power. This can be illustrated by a simple example on the source side: assume the RMS line-to-line voltage magnitude is \(\sqrt{3}U_0\) and each phase has an effective internal resistance \(R\) (including cell and power-electronics conduction resistances). Under balanced operation (Fig.~\ref{imbalance}(a)), each phase has RMS voltage \(U_0\) and RMS current \(I\), giving total conduction loss \(3I^2R\). In an extreme imbalanced case where point \(O\) overlaps point \(C\) in Fig.~\ref{imbalance}(b), the phase RMS voltages become \(\sqrt{3}U_0,\sqrt{3}U_0,0\), and the phase RMS currents become \(\sqrt{3}I,\sqrt{3}I,0\), yielding total loss \(2(\sqrt{3}I)^2R=6I^2R\), i.e., twice the balanced loss. Therefore, we do not intentionally imbalance power among phases in this paper.

\section{Control strategy}

\subsection{Basic Control Idea}
\begin{figure}[htbp]
\centering
\includegraphics[scale=0.7]{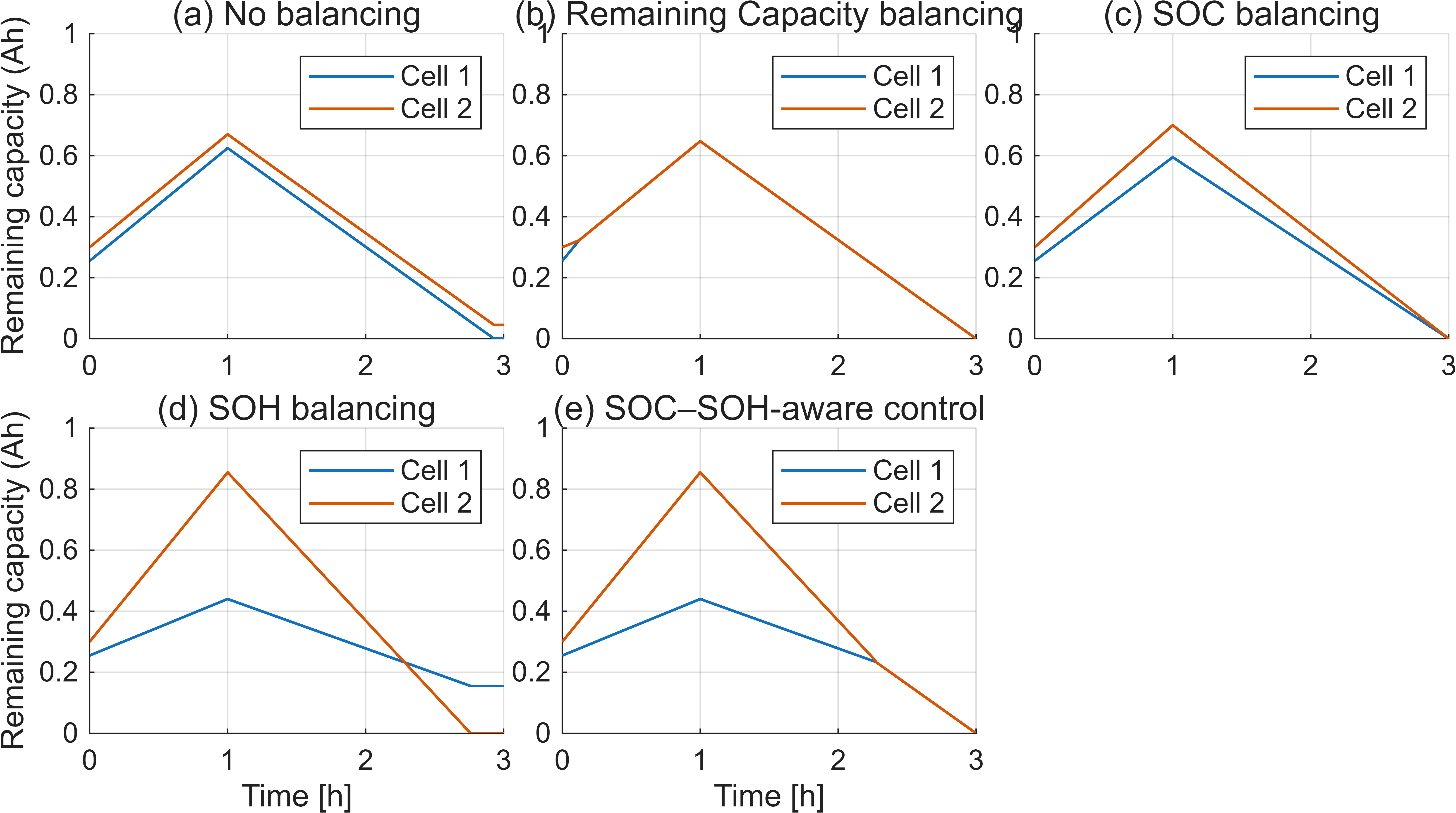}
\caption{Comparison of different control methods for the cell-level inverter with two cells in total. The battery pack is first charged from 40\,\% SOC to 70\,\% SOC and is then fully discharged.}
\label{idea}
\end{figure}

The idea of jointly optimizing battery lifetime and energy utilization can be illustrated using a simple example. In Fig.~\ref{idea}, we consider a pack with two cells. Cell~2 has a higher SOH than Cell~1, and both cells start at 30\,\% SOC with nominal capacity 1~Ah. The pack is first charged with a fixed current profile (i.e., fixed AC current magnitude on the source side) to a target pack SOC of 70\,\%, and is then fully discharged. Although the overall current profile is the same in Fig.~\ref{idea}(a)--(e), different strategies distribute the Ah throughput differently across the two cells. If the two cells are charged and discharged equally, the remaining capacities follow Fig.~\ref{idea}(a). Because Cell~1 has lower SOH (and thus lower usable capacity), it is depleted first during discharge, leaving a portion of Cell~2's capacity unused. If the strategy instead performs remaining-capacity balancing (Strategy \ref{greedy2} in Section \ref{section_greedy}), the two cells maintain equal remaining capacity after an initial balancing step, and the usable energy can be fully extracted, as shown in Fig.~\ref{idea}(b). However, since the Ah throughput of the two cells remains nearly identical, Cell~1 will still tend to reach EOL earlier over repeated cycling, wasting the remaining life of Cell~2. A similar conclusion applies to SOC balancing (Strategy \ref{greedy3} in Section \ref{section_greedy}) in Fig.~\ref{idea}(c): although SOC balancing can improve energy utilization (enabled by the ability to bypass cells in the cell-level inverter topology), it does not preferentially reduce the stress on the weaker cell, so SOH disparity can grow over time and pack lifetime is not optimized.

In contrast, under SOH balancing (Fig.~\ref{idea}(d)), Cell~2 is deliberately used more than Cell~1, which reduces the SOH gap and can cause the two cells to reach EOL more simultaneously, minimizing wasted lifetime. The trade-off is that Cell~2 is depleted faster during discharge, which can reduce usable energy and hence driving range. Fig.~\ref{idea}(e) illustrates a more favorable strategy: it resembles SOH balancing during charging by allocating most Ah throughput to Cell~2, but it also performs remaining-capacity balancing during discharge so that no energy is left unused. This combined SOH + capacity balancing can therefore improve both long-term lifetime and energy utilization.

Finally, note that in Fig.~\ref{idea}(e) we apply SOH balancing during charging and remaining-capacity balancing during discharging, rather than the reverse. The primary reason is that battery degradation is SOC-dependent: high-SOC operation during charging can accelerate aging via both calendar aging mechanisms and charge-induced stress \cite{SOCreview}. In contrast, although discharging contributes to cyclic aging, it also reduces SOC and can slow calendar aging, so its net impact on degradation is more uncertain \cite{SOCreview2}. Therefore, if SOC balancing is performed during charging while SOH balancing is attempted during discharge, the correct discharge allocation is ambiguous: it is unclear whether the weakest cells should be discharged more (to reduce time spent at high SOC) or less (to reduce cyclic-aging stress).
Moreover, traction discharge power is exogenous and uncertain compared with scheduled charging, so implementing SOH balancing during discharge is considerably more complex; we therefore focus on SOH balancing during charging in this paper.

\subsection{The proposed charging strategy}
As discussed earlier, the core idea of ``SOC–SOH-aware'' control is to perform SOH balancing during charging and remaining-capacity balancing during discharge. The discharge strategy has been presented in Strategy~\ref{greedy2}. In this section, we focus on the charging controller, whose goal is to promote SOH balancing without creating an SOC distribution that cannot be rebalanced during the subsequent discharge.

\subsubsection{Model simplifications}
During charging, the controllable variables include the phase voltage, line current, and the cells' voltage-level assignments. These variables---especially the level assignments---can be updated at each control interval. Without simplification, the number of decision variables grows with the charging horizon and quickly becomes very large even for a small number of cells. Moreover, the level assignments are discrete (integer) decisions, which turn the problem into a large-scale mixed-integer optimization that is generally computationally intractable.

The primary purpose of the model simplifications is to reduce the degrees of freedom so that the number of decision variables does not grow with charging time. In Strategy~\ref{greedy1}, we showed that an assignment sequence can be generated once the desired final cell states are specified. Therefore, rather than optimizing the integer assignments at every time step, we optimize only the desired final states. For the phase voltage, we adopt a constant-voltage approximation and operate at the maximum permissible voltage magnitude (subject to cell and hardware voltage limits). As illustrated in Fig.~\ref{LSPWM_fig}, a higher phase-voltage magnitude corresponds to more cells connected in series, which allows a smaller line current for a fixed charging power. Since degradation often increases with C-rate, operating at the highest admissible phase voltage can reduce aging for a given charging-time requirement.

Finally, we assume a CCCV charging profile, which is widely used for EV batteries \cite{cccv}. Specifically, cells are charged at a constant current until the maximum cell voltage reaches its limit; afterward, charging proceeds in a constant-voltage stage. Because the constant-voltage stage is largely determined by the cell states and voltage limit, the main continuous degree of freedom is the constant current in the CC stage.

With these simplifications, the remaining decision variables are:
\begin{itemize}
    \item the target final added charge (capacity gain) of each cell, and
    \item the line current during the constant-current stage.
\end{itemize}
Although the number of decision variables is now small, the problem remains challenging because the charging-time constraint is difficult to express analytically. The total charging time depends on both the current profile and the required cell-wise capacity gains, and the current is not constant throughout the CCCV process. To further simplify the time constraint, we discretize the CCCV profile into multiple stages: Stage~0 represents constant-current charging, and Stages~1 through \(m\) represent constant-voltage charging. An example of this stage division is shown in Fig.~\ref{simpli}(a).

\begin{figure}[htbp]
\centering
\includegraphics[scale=0.7]{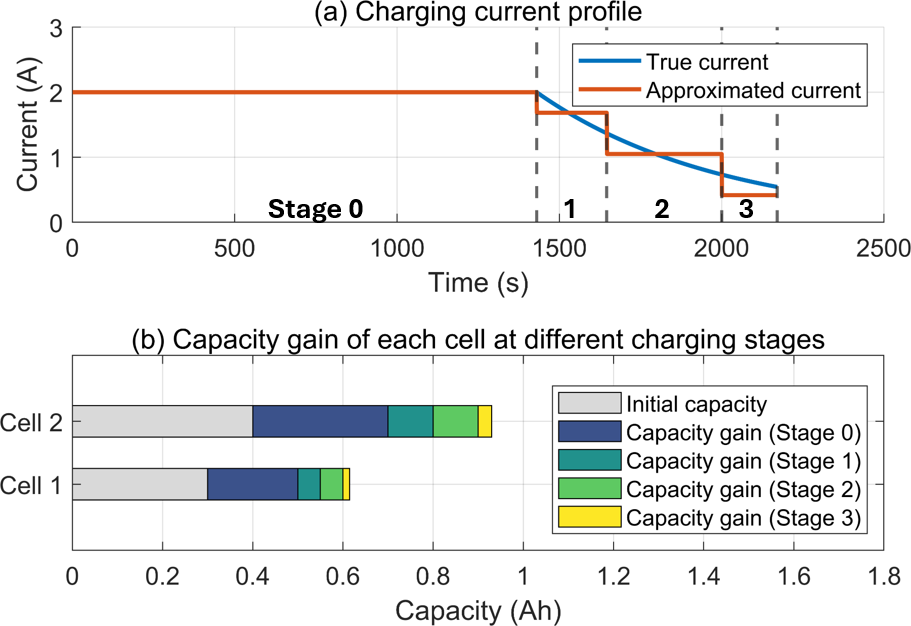}
\caption{The division of the charging stages.}
\label{simpli}
\end{figure}

In the example shown in Fig.~\ref{simpli}, the CCCV charging is divided into four stages. The stage division is based on SOC, and the upper SOC boundary of Stage \(i\) is denoted by \(SOC_{\max,i}\). We assume that the maximum admissible (normalized) C-rate is a function of SOC, denoted by \(g(SOC)\). Stage~0 represents constant-current charging with a fixed line-current magnitude \(I_{cc}\), which corresponds to a normalized C-rate \(I_{cc}/Q_{no}\). In the current-envelope-limited setting, Stage~0 ends when this commanded C-rate reaches the admissible envelope, i.e.,
\begin{equation}
SOC_{\max,0}=g^{-1}\!\left(\frac{I_{cc}}{Q_{no}}\right).
\end{equation}
The remaining SOC interval \([SOC_{\max,0},1]\) is then approximated by \(m\) constant-voltage sub-stages (Stages \(1\) through \(m\)).

Specifically, we choose the stage boundaries to uniformly partition \([SOC_{\max,0},1]\) in SOC:
\begin{equation}\label{maxSOC}
SOC_{\max,i}=
\begin{cases}
g^{-1}\!\left(\dfrac{I_{cc}}{Q_{no}}\right), & i=0,\\[2mm]
SOC_{\max,0}+\dfrac{i}{m}\left(1-SOC_{\max,0}\right), & i=1,2,\ldots,m.
\end{cases}
\end{equation}
For each stage, we approximate the C-rate magnitude by a constant \(I^{no}_{\mathrm{avg},i}\) defined as
\begin{equation}\label{Iavg}
I^{no}_{\mathrm{avg},i}=
\begin{cases}
\dfrac{I_{cc}}{Q_{no}}, & i=0,\\[2mm]
\dfrac{g(SOC_{\max,i-1})+g(SOC_{\max,i})}{2}, & i=1,2,\ldots,m,
\end{cases}
\end{equation}
where \(Q_{no}\) is the nominal cell capacity.

Within each stage, we assume the current magnitude (and thus the C-rate) is constant and equals \(I^{no}_{\mathrm{avg},i}\) (unit: C or 1/h). Under this approximation, the capacity gain delivered in each stage is proportional to that stage's duration. Let \(Q_{p,i}\) denote the capacity gain (Ah) of cell \(p\) in Stage \(i\). Then the total charging time can be expressed as a linear function of \(\{Q_{p,i}\}\). By using \(\{Q_{p,i}\}\) as decision variables, the time constraint becomes linear, substantially simplifying the optimization. This approximation becomes increasingly accurate as the number of stages \(m\) increases, at the cost of additional decision variables and computation time.

\subsubsection{Formulation of the optimization problem}
In summary, the model simplifications in the previous subsection are:
\begin{itemize}
    \item The phase-voltage magnitude \(U_{\text{phase}}\) is set to the highest admissible constant (subject to voltage limits).
    \item The charging profile is approximated by CCCV charging, discretized into \(m{+}1\) stages, with a constant line-current magnitude in each stage.
    \item The decision variables are the per-cell capacity gains in each stage, denoted by \(Q_{i,j}\) for cell \(i\) and stage \(j\). Note that the CC current magnitude \(I_{cc}\) is treated as a fixed parameter in the formulation below, and is optimized separately via a grid search.
\end{itemize}
We next define an objective function and constraints in terms of these decision variables. Ideally, the cost would reflect the capacity loss directly; however, this would require an explicit aging model, thereby reducing generality. Instead, we adopt a surrogate objective that discourages charging low-SOH cells and high-C-rate operation. For fixed \(U_{\text{phase}}\) and \(I_{cc}\), we solve
\begin{equation}\label{cf}
    \min_{\{Q_{i,j}\}} \sum_{i=1}^{n_1}\sum_{j=0}^{m} w_{i,j}\,Q_{i,j},
\end{equation}
where the weights are defined as
\begin{equation}\label{weight}
w_{i,j}=
\begin{cases}
\dfrac{1+\kappa I^{no}_{\mathrm{avg},j}}{(SOH_i-SOH_{\mathrm{EOL}})^2}, & SOH_i>SOH_{\mathrm{EOL}}+\varepsilon,\\[2mm]
\left(1+\kappa I^{no}_{\mathrm{avg},j}\right) M, & SOH_i\le SOH_{\mathrm{EOL}}+\varepsilon,
\end{cases}
\end{equation}
with \(\kappa=0.1\), \(\varepsilon=10^{-3}\), and a large constant \(M=10^{6}\).
Here \(SOH_i\) is the SOH of cell \(i\), and \(SOH_{\mathrm{EOL}}\) is the EOL threshold (70\,\% by default). This choice of \(w_{i,j}\) penalizes allocating charge to low-SOH cells and to stages with higher C-rate; once a cell reaches EOL, the large weight effectively prevents further charging, so the cell is bypassed during charging.

The constraints of the optimization problem are listed below:
\begin{enumerate}
\item \textbf{Nonnegativity of stage-wise allocations.}
\begin{equation}
Q_{i,j}\ge 0,\qquad \forall i,\forall j.
\end{equation}

\item \textbf{Fixed total added charge content.}
\begin{equation}
\sum_{i=1}^{n_1}\sum_{j=0}^{m} Q_{i,j}
=
Q_{\mathrm{final,sum}}-Q_{\mathrm{initial,sum}},
\end{equation}
where $Q_{\mathrm{final,sum}}$ and $Q_{\mathrm{initial,sum}}$ denote the target and initial \emph{total charge contents} (Ah) summed across all $n_1$ cells, respectively.

\item \textbf{Charging-time limit.}
\begin{equation}
\sum_{j=0}^{m}
\frac{\sum_{i=1}^{n_1} Q_{i,j}}
{Q_{no}\, I^{no}_{\mathrm{avg},j}\, \sum_{k=1}^{n_1} d_{k,j}}
\le t_{\mathrm{total}},
\end{equation}
where $t_{\mathrm{total}}$ is the charging-time limit (hours), and $d_{i,j}$ is the duty cycle of voltage level $i$ in stage $j$.
For sinusoidal modulation, we approximate
\begin{equation}\label{dij}
d_{i,j}=\frac{2}{\pi}\arccos\!\left(
\min\!\left(\frac{(2i-1)U_{\mathrm{ter},j}}{2U_{\mathrm{phase}}},\,1\right)
\right),
\end{equation}
where $U_{\mathrm{ter},j}$ is the representative per-cell terminal voltage in stage $j$ and $U_{\mathrm{phase}}$ is the (fixed) phase-voltage magnitude. For DC modulation, we use
\begin{equation}
d_{i,j}=
\begin{cases}
0, & (i-1)U_{\mathrm{ter},j}>U_{\mathrm{phase}},\\
1, & iU_{\mathrm{ter},j}<U_{\mathrm{phase}},\\
\dfrac{U_{\mathrm{phase}}-(i-1)U_{\mathrm{ter},j}}{U_{\mathrm{ter},j}}, & \text{otherwise}.
\end{cases}
\end{equation}
Strictly speaking, terminal voltages vary across cells and time. We approximate the stage-wise terminal voltage by the OCV at an average SOC plus an average ohmic drop:
\begin{equation}\label{ter_appro}
U_{\mathrm{ter},j}
\approx
f_{OCV}\!\left(\frac{Q_{\mathrm{final,sum}}+Q_{\mathrm{initial,sum}}}{2\sum_{i=1}^{n_1} Q_{\max,i}}\right)
+ Q_{no}\, I^{no}_{\mathrm{avg},j} R_0,
\end{equation}
where $f_{OCV}(\cdot)$ is the OCV--SOC map, $Q_{\max,i}$ is the current maximum capacity of cell $i$, and $R_0$ is an average source-side internal resistance (cell plus power-electronics conduction resistance).

\item \textbf{Per-stage SOC upper bounds (voltage-limit proxy).}
At the end of each stage $k$, the SOC of each cell must not exceed $SOC_{\max,k}$ defined in~\eqref{maxSOC}:
\begin{equation}
Q_{\mathrm{initial},i}+\sum_{j=0}^{k} Q_{i,j}
\le
Q_{\max,i}\, SOC_{\max,k},
\qquad \forall i,\forall k.
\end{equation}

\item \textbf{Stage-wise achievability under LSPWM.}
According to Theorem~\ref{t1}, for each stage $j$, the target allocation vector $\{Q_{i,j}\}_{i=1}^{n_1}$ must satisfy
\begin{equation}\label{strong}
\frac{\sum_{i=1}^{k} Q_{(i),j}}{\sum_{i=1}^{n_1} Q_{i,j}}
\le
\frac{\sum_{\ell=1}^{k} d_{\ell,j}}{\sum_{\ell=1}^{n_1} d_{\ell,j}},
\qquad \forall k\in\{1,\ldots,n_1\},\ \forall j\in\{0,\ldots,m\},
\end{equation}
where $Q_{(1),j}\ge \cdots \ge Q_{(n_1),j}$ denotes the sorted values of $\{Q_{i,j}\}_{i=1}^{n_1}$. Because level assignments can be permuted independently across stages and each stage is treated as a constant-duty allocation subproblem, imposing~\eqref{strong} for each stage is sufficient under the stage-wise approximation.

A direct implementation of~\eqref{strong} involves sorting. We instead enforce it using an epigraph formulation. Introduce auxiliary variables $t_{k,j}\in\mathbb{R}$ and $s_{i,k,j}\ge 0$ such that
\begin{equation}\label{topk_epigraph}
s_{i,k,j}\ge Q_{i,j}-t_{k,j},\qquad s_{i,k,j}\ge 0,\qquad \forall i,\forall k,\forall j.
\end{equation}
For any fixed $(k,j)$, minimizing $k\,t_{k,j}+\sum_i s_{i,k,j}$ over $t_{k,j}$ yields $\sum_{i=1}^k Q_{(i),j}$. Therefore,~\eqref{strong} is equivalently enforced by requiring the \emph{existence} of $(t_{k,j},s_{i,k,j})$ satisfying~\eqref{topk_epigraph} and
\begin{equation}\label{strong_lp}
k\,t_{k,j}+\sum_{i=1}^{n_1} s_{i,k,j}
\le
\left(\frac{\sum_{\ell=1}^{k} d_{\ell,j}}{\sum_{\ell=1}^{n_1} d_{\ell,j}}\right)
\sum_{i=1}^{n_1} Q_{i,j},
\qquad \forall k,\forall j.
\end{equation}

\item \textbf{SOH-consistent ordering of final stored charge (structural constraint).}
Cells with higher SOH are required to have higher stored charge by the end of charging:
\begin{equation}\label{higher}
Q_{\mathrm{initial},i}+\sum_{k=0}^{m} Q_{i,k}
\le
Q_{\mathrm{initial},j}+\sum_{k=0}^{m} Q_{j,k},
\qquad \text{if } SOH_i<SOH_j.
\end{equation}
This constraint aligns the solution with the intended SOH-aware usage pattern and facilitates the discharge-feasibility constraint below.

\item \textbf{Sufficient condition for full utilization during subsequent discharge.}
We consider the target discharge end state $Q_{\mathrm{end},i}=0$ for all cells; hence the required discharge amounts are $\Delta Q_i = Q_{\mathrm{final},i}$, where $Q_{\mathrm{final},i}\triangleq Q_{\mathrm{initial},i}+\sum_{k=0}^{m}Q_{i,k}$.
Applying Theorem~\ref{t1} yields the sufficient condition
\begin{equation}\label{enough}
\frac{\sum_{i=1}^k Q_{\mathrm{final},(i)}}{\sum_{i=1}^{n_1} Q_{\mathrm{final},i}}
\le
\frac{\sum_{i=1}^k d'_{i}}{\sum_{i=1}^{n_1} d'_{i}},
\qquad \forall k\in\{1,\ldots,n_1\},
\end{equation}
where $Q_{\mathrm{final},(1)}\ge\cdots\ge Q_{\mathrm{final},(n_1)}$ are the sorted final charge contents and $\{d'_i\}$ is a representative discharge duty-cycle pattern computed from average discharge voltage and phase-voltage magnitude. Similar to~\eqref{dij}, we approximate
\begin{equation}
d'_i=\frac{2}{\pi}\arccos\!\left(
\min\!\left(\frac{(2i-1)U_{\mathrm{dis,ter}}}{2U_{\mathrm{dis,phase}}},\,1\right)
\right),
\end{equation}
with
\begin{equation}\label{udis_ter_appro}
U_{\mathrm{dis,ter}}
\approx
f_{OCV}\!\left(\frac{Q_{\mathrm{final,sum}}}{2\sum_{i=1}^{n_1}Q_{\max,i}}\right)
- Q_{no}\, I^{no}_{\mathrm{dis,avg}}\,R_0,
\end{equation}
where $I^{no}_{\mathrm{dis,avg}}$ is the average C rate during discharge. For a more conservative design, $I^{no}_{\mathrm{dis,avg}}$ may be replaced by the maximum allowable discharge C-rate. Because~\eqref{higher} enforces a consistent ordering between SOH and the final stored charge, we may equivalently interpret “$(i)$” in~\eqref{enough} as the index of the cell with the $i$th highest SOH (using any consistent tie-breaking rule). Since the SOHs are known, \eqref{enough} can be enforced directly in the implementation.
\end{enumerate}

For any fixed $U_{\mathrm{phase}}$ and $I_{cc}$, the problem \eqref{cf} subject to Constraints~1--7 is a linear program (LP) in the decision variables $\{Q_{i,j}\}$ and can be solved efficiently. The overall procedure for selecting $U_{\mathrm{phase}}$, $I_{cc}$, and $\{Q_{i,j}\}$ is summarized in Fig.~\ref{flow}. We initialize $U_{\mathrm{phase}}$ and $I_{cc}$ at their maximum admissible values $(U_{\mathrm{phase,max}}, I_{cc,\max})$. A larger $U_{\mathrm{phase}}$ increases the minimum number of active voltage levels and therefore reduces the achievable imbalance; accordingly, we decrease $U_{\mathrm{phase}}$ until the LP becomes feasible. After $U_{\mathrm{phase}}$ is determined, we enumerate candidate values of $I_{cc}$ and select the pair $(I_{cc},\{Q_{i,j}\})$ that minimizes the objective in \eqref{cf}.

\begin{figure}[htbp]
\centering
\includegraphics[scale=0.6]{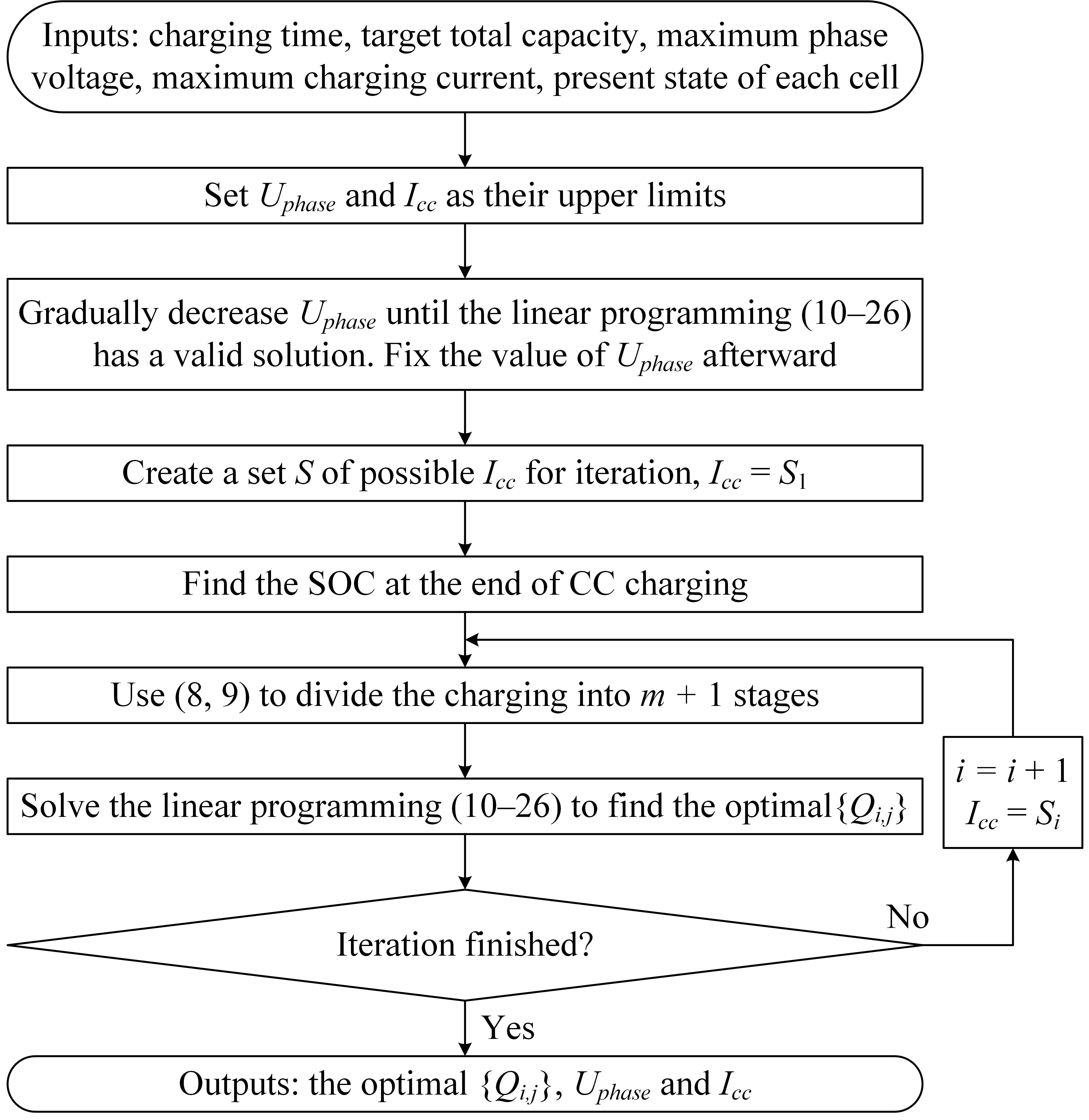}
\caption{The flow chart of the optimization algorithm.}
\label{flow}
\end{figure}

In some cases, the driver may want to charge the vehicle to a specific SOC level as quickly as possible. This can be achieved by skipping the step of finding the optimal $I_{cc}$ and setting $I_{cc}$ to the cell's maximum charging current. In other words, only the first two steps in Figure \ref{flow} are required, and the rest can be skipped.

After the optimization problem is solved, the phase voltage and line current are set as the optimized values. In implementation, the level assignments are adjusted online via Strategy~\ref{greedy1}, which provides robustness to deviations from the average-voltage approximations used in the LP.

\section{Simulation study}
\subsection{Simulation setup}\label{Section_aging}
%A paragraph that justifies no experiment
The lifetime of a battery pack is largely determined by the fastest-aging cells, and cell aging rates typically exhibit substantial heterogeneity. As a result, experimentally validating the proposed strategy can be expensive and time-consuming, as many cells and multiple tests are required to obtain statistically significant results. Therefore, as the first study to introduce the concept of ``SOC–SOH-aware control,'' we primarily demonstrate its effectiveness through comprehensive simulations under a wide range of scenarios and parameter settings. The code for this study is available at \href{https://github.com/Shida-Jiang/Model-Agnostic-Control-for-EV-Range-and-Lifetime-Extension}{https://github.com/Shida-Jiang/Model-Agnostic-Control-for-EV-Range-and-Lifetime-Extension}.

Specifically, we perform two types of simulations. First, a drive-cycle simulation examines how each cell's SOC evolves during charging and discharging. The discharge current profile is extracted from a real-world driving test and repeatedly used until the pack energy is depleted. The results (similar to Fig.~\ref{idea}(e)) illustrate how the proposed strategy intentionally creates non-uniform utilization during charging (i.e., different cells receive different effective charge throughput) while still enabling full energy utilization during discharge. 

Second, a long-term aging simulation quantifies the lifetime improvement of the proposed strategy relative to conventional SOC balancing. To construct realistic usage conditions, we use charging records collected from a 2021 Tesla Model~3 over approximately 2.5 years, comprising 255 cycles, as shown in Fig.~\ref{aging_exp}(a). In the simulation, these profiles are repeated until the battery pack reaches EOL. Among these charging events, 81 are DC fast charging with an average C rate greater than 0.2 C, and the remaining 174 are AC charging. Since the purpose of DC fast charging is typically to achieve sufficient capacity as quickly as possible, we always use the highest available voltage and current to maximize charging speed and suppress SOH balancing. Meanwhile, for each AC charging event, the estimated initial and final capacities and the charging duration are provided as inputs to the optimization algorithm in Fig.~\ref{flow}, whose output determines the cell-level charging commands. During discharge, we apply SOC balancing, and each discharge terminates when the pack's SOC reaches the recorded initial SOC of the subsequent charging session. This procedure is repeated until the pack reaches the EOL criterion defined in Section~\ref{end_pack}.

\begin{figure}[htbp]
\centering
\begin{subfigure}{0.43\textwidth}
\includegraphics[width=\textwidth]{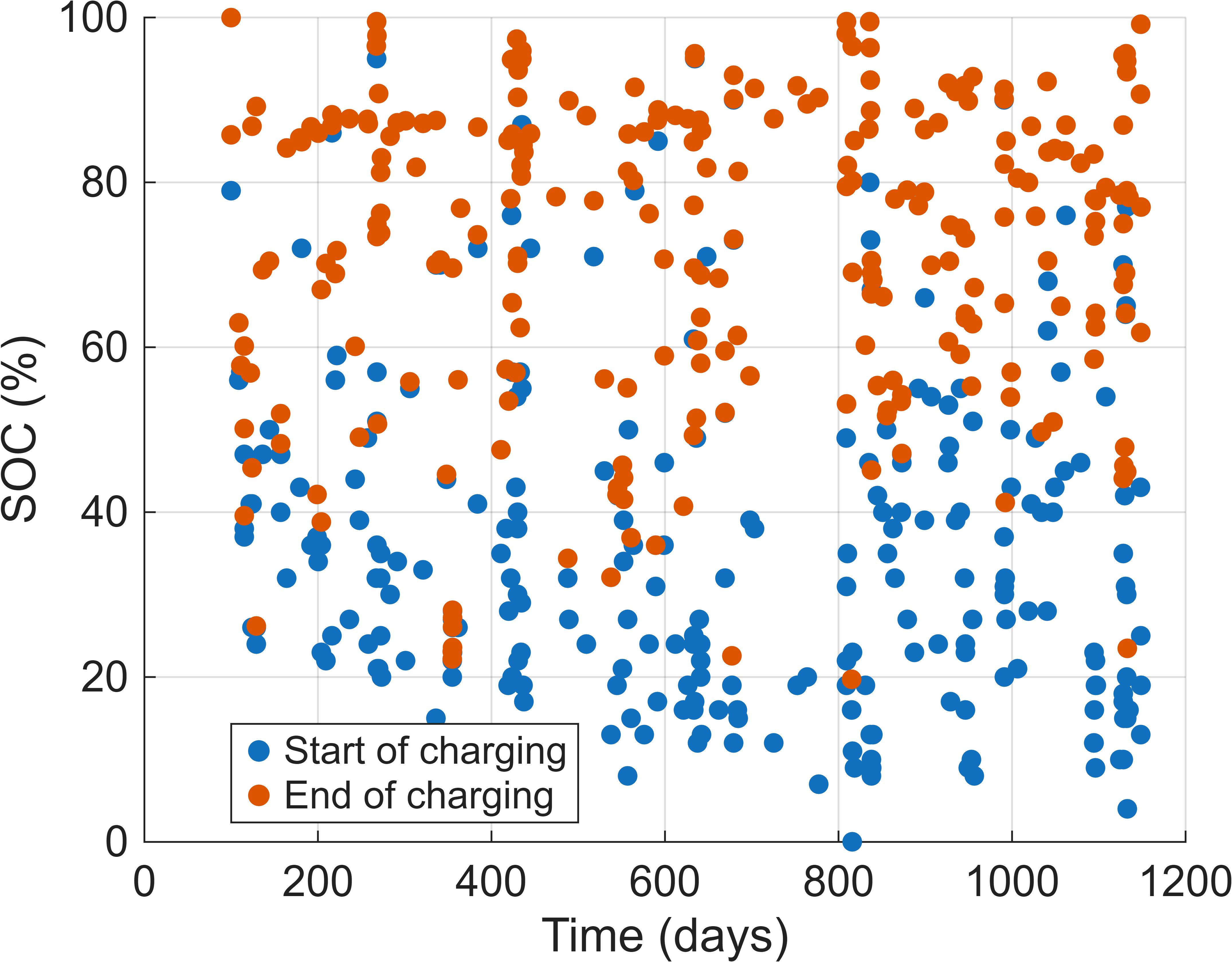}
\subcaption{The charging records used for the long-term aging simulation.}
\end{subfigure}
\begin{subfigure}{0.56\textwidth}
\includegraphics[width=\textwidth]{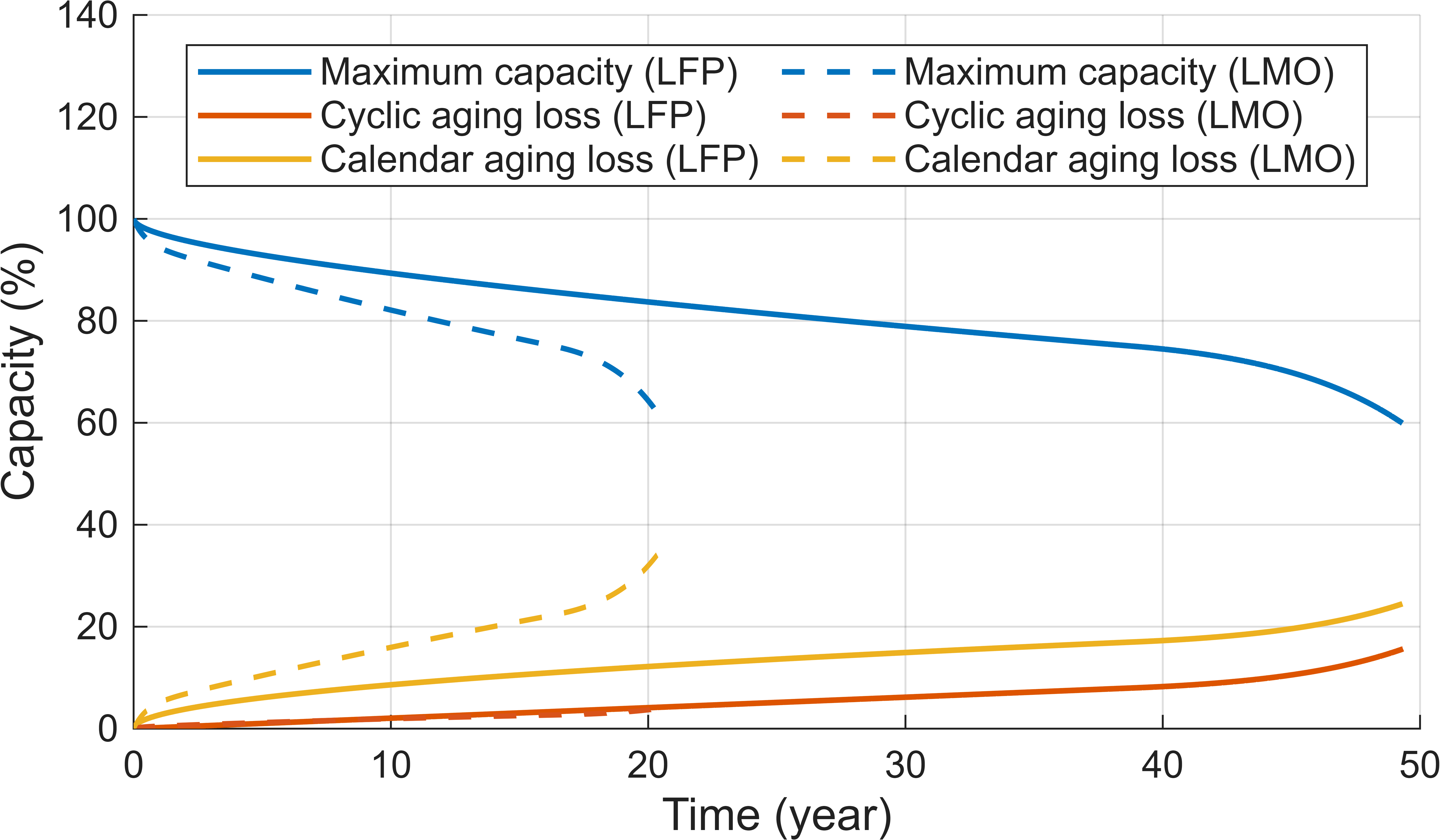}
\subcaption{Typical aging curves of two types of lithium-ion batteries.}
\end{subfigure}
\caption{Long-term aging simulation setups.}
\label{aging_exp}
\end{figure}

Note that, in both simulations, the proposed strategy balances SOH only within each phase. Therefore, for brevity, we report results only for the first phase of the three. 

The default specifications of the battery pack and chargers used in the simulations are as follows. Note that these specifications are not fixed across all cases; the impact of varying them is discussed in Section~\ref{sec_sensitive}. By default, each cell has a nominal capacity of 2.3~Ah, and each phase comprises 20 cells. The internal resistance of each cell is $R_0=0.01\,\Omega$. During relaxation, the surface temperature of each cell is assumed to be \(25^{\circ}\mathrm{C}\). During driving and charging, the surface temperature is modeled as a cell-to-cell random variation: for each cell, \(T\) is sampled once from a Gaussian distribution with mean \(35^{\circ}\mathrm{C}\) and standard deviation \(2^{\circ}\mathrm{C}\), and the sampled value is held constant over time. During charging, the maximum allowable C-rate \(I^{no}_{\max}\) is modeled as a function of SOC. By fitting the charging data of a Tesla Model 3 available on EVKX.net, we obtain the following empirical function:
\begin{equation}
    I^{no}_{\max}=2.6963-2.5795\cdot SOC
\end{equation}

 Two lithium-ion chemistries are considered: lithium iron phosphate (LFP) and lithium manganese oxide (LMO). The OCV--SOC relationships for LFP~\cite{lfpocv} and LMO~\cite{lmoocv} are given in \eqref{ocv_lfp} and \eqref{ocv_lmo}, respectively, where \(SOC\in[0,1]\).
\small{\begin{equation}\label{ocv_lfp}
    OCV_{\mathrm{LFP}} = -0.5863 \exp(-21.9\,SOC) + 3.414 + 0.1102\,SOC - 0.1718\,\exp\!\Big(\frac{-0.008}{1 - SOC}\Big)
\end{equation}}
\small{\begin{equation}\label{ocv_lmo}
    OCV_{\mathrm{LMO}} = 3.875 - 0.335 \big(-\ln(SOC)\big)^{0.653} - 0.5332\,SOC + 0.8315 \exp\!\big(0.6(SOC-1)\big)
\end{equation}}
\normalsize 
The aging models for the two chemistries are adopted from \cite{agingmodel,agingmodel2}. In these models, calendar aging depends on \(SOC\), temperature, and storage time, whereas cyclic aging depends on temperature, C-rate, Ah throughput, depth of discharge, and average \(SOC\). For LFP cells, \cite{agingmodel} models the short-term degradation as
\small{\begin{equation}\label{equ_LFP}
    \Delta SOH_{\mathrm{LFP}} =
    (aT^2+bT+c)\exp\!\big((dT+e) I^{no}\big)\cdot \frac{|\Delta SOC|}{4.6}
    + \frac{f}{2.3}\exp\!\big(g \overline{SOC} + h/T\big)\cdot \frac{\Delta t}{2\sqrt{t}},
\end{equation}}
where \(T\) is the temperature in Kelvin, \(I^{no}\) is the average C-rate, \(\overline{SOC}\) is the average \(SOC\), \(\Delta SOC\) is the \(SOC\) change over the (dis)charge interval, \(t\) is the accumulated calendar-aging time, and \(\Delta t\) is the duration of the current interval. The parameters \(a\)–\(h\) are given in Table~\ref{aging_coe}. During relaxation, \(\Delta SOC=0\) (and \(I^{no}=0\)), so \eqref{equ_LFP} reduces to the calendar-aging term (the second term).

For LMO cells, \cite{agingmodel2} models the short-term degradation as
\scriptsize{\begin{equation}\label{equ_LMO}
    \Delta SOH_{\mathrm{LMO}} =
    \Big[\alpha_{\mathrm{sei}}\beta_{\mathrm{sei}} \exp\!\big(-\beta_{\mathrm{sei}} (f_t+\sum_{i=1}^N f_{c,i})\big)
    +(1-\alpha_{\mathrm{sei}})\exp\!\big(-(f_t+\sum_{i=1}^N f_{c,i})\big)\Big]
    \Big(f_{c,N+1}+\frac{f_t\Delta t}{t}\Big),
\end{equation}}
\normalsize
where \(N\) is the total number of half cycles completed, and \(\alpha_{\mathrm{sei}}\) and \(\beta_{\mathrm{sei}}\) are model parameters. The functions \(f_{c,i}\) and \(f_t\) are
\begin{equation}
    f_{c,i}=\frac{0.5}{k_{\delta 1}|\Delta SOC_i|^{k_{\delta 2}}+k_{\delta 3}}
    \exp\!\Big(k_\sigma(\overline{SOC}_i-\sigma_{ref})+\frac{k_TT_{ref}(T-T_{ref})}{T}\Big),
\end{equation}
\begin{equation}
    f_t=k_t\,t\,
    \exp\!\Big(k_\sigma(\overline{SOC}-\sigma_{ref})+\frac{k_TT_{ref}(T-T_{ref})}{T}\Big),
\end{equation}
where \(\Delta SOC_i\) and \(\overline{SOC}_i\) denote the \(SOC\) swing and average \(SOC\) of the \(i\)th half cycle. In addition, \(\Delta SOC_{N+1}=\Delta SOC\) and \(\overline{SOC}_{N+1}=\overline{SOC}\). The constants \(k_{\delta 1},k_{\delta 2},k_{\delta 3},k_{\sigma}, \sigma_{ref}, k_T, T_{ref}\), and \(k_t\) are listed in Table~\ref{aging_coe}.

\begin{table}[htbp]
\caption{Parameters of the battery aging models}\footnotesize
\centering
\begin{tabular}{|c|c|c|c|c|c|}
\hline
\textbf{Parameter} & \boldmath$a$ & \boldmath{$b$} & \boldmath{$c$} & \boldmath{$d$} & \boldmath{$e$} \\ \hline
\textbf{Values} & $2.0916\text{E}{-8}$ & $-1.2179\text{E}{-5}$ & 0.0018 & $-1.7082\text{E}{-6}$ & $0.0556$ \\ \hline
\textbf{Parameter} & \boldmath{$f$} & \boldmath{$g$} & \boldmath{$h$} & \boldmath{} & \multicolumn{1}{l|}{} \\ \hline
\textbf{Values} & $5.9808\text{E}{6}$ & $0.6898$ & $-6.4647\text{E}{3}$ &  & \multicolumn{1}{l|}{} \\ \hline
\textbf{Parameter} & \boldmath$\alpha_{\text{sei}}$ & \boldmath$\beta_{\text{sei}}$ & \textbf{\boldmath$k_{\delta_1}$} & \textbf{\boldmath$k_{\delta_2}$} & \boldmath$k_{\delta_3}$ \\ \hline
\textbf{Values} & $5.75\text{E}{-2}$ & 121 & $1.40\text{E}{5}$ & $-5.01\text{E}{-1}$ & $-1.23\text{E}{5}$ \\ \hline
\textbf{Parameter} & \boldmath$k_{\sigma}$ & \boldmath$\sigma_{\text{ref}}$ & \boldmath$k_{T}$ & \boldmath$T_{\text{ref}}$ & \boldmath$k_{t}$ \\ \hline
\textbf{Values} & $1.04$ & $0.50$ & $6.93\text{E}{-2}$ & $298.15\text{ K}$ & $4.14\text{E}{-10}$ s$^{-1}$ \\ \hline
\end{tabular}
\label{aging_coe}
\end{table}

However, the aging models in \eqref{equ_LFP} and \eqref{equ_LMO} have two limitations in the present context. First, they do not account for cell-to-cell heterogeneity within a pack. Second, they are primarily calibrated to the early-life regime (approximately 80--100\,\% SOH), where the degradation rate initially decreases and then becomes nearly constant (i.e., the SOH trajectory is typically convex in cycle/time). In contrast, many experimental studies report a knee point in lithium-ion aging curves \cite{knee1}, after which the degradation rate accelerates markedly (i.e., the trajectory becomes concave). Therefore, to simulate the degradation of individual cells, we augment the baseline model by introducing (i) a cell-specific multiplicative factor to represent heterogeneity and (ii) an additional acceleration term that activates once SOH falls below a prescribed knee point. Specifically, the short-term degradation of cell \(i\) is computed by \eqref{lfp_equ2} or \eqref{lmo_equ2}, depending on the chemistry. The aging model is used only to evaluate battery degradation \emph{after} optimization; it is not embedded in the optimization to preserve the general applicability of the proposed method.

\begin{equation} \label{lfp_equ2}
    \Delta \mathrm{SOH}_{\mathrm{LFP},i}
    = \gamma_i \Big[1+50\max\{\mathrm{SOH}_{\mathrm{knee}}-\mathrm{SOH}_{\mathrm{LFP},i},\,0\}\Big]\cdot \Delta \mathrm{SOH}_{\mathrm{LFP}}
\end{equation}
\begin{equation} \label{lmo_equ2}
    \Delta \mathrm{SOH}_{\mathrm{LMO},i}
    = \gamma_i \Big[1+50\max\{\mathrm{SOH}_{\mathrm{knee}}-\mathrm{SOH}_{\mathrm{LMO},i},\,0\}\Big]\cdot \Delta \mathrm{SOH}_{\mathrm{LMO}}
\end{equation}
where \(\gamma_i \sim \mathcal N(1,\sigma_\gamma^2)\) is a cell-specific random factor (sampled once per cell and held constant over time) that captures cell-to-cell variation. By default, \(\sigma_\gamma=0.1\), corresponding to a 10\,\% RMS variation in aging rate across cells under identical operating conditions. Variables \(\mathrm{SOH}_{\mathrm{LFP},i}\) and \(\mathrm{SOH}_{\mathrm{LMO},i}\) denote the current SOH of cell \(i\), and \(\mathrm{SOH}_{\mathrm{knee}}\) denotes the knee-point SOH, set to 75\,\% by default. Figure~\ref{aging_exp}(b) illustrates typical simulated aging trajectories for LFP and LMO cells under representative EV usage. Under the assumed usage pattern with long dwell times (i.e., low cycling frequency), calendar aging contributes a substantial fraction of the total degradation.

\subsection{Simulation results}

The drive-cycle results are shown in Fig.~\ref{onecycle}(a) and (b). For visualization, each phase contains 10 LFP cells, ranked by SOH (cell \(i\) has the \(i\)th highest SOH). In both cases, the pack starts at \(0\,\%\) SOC, is charged to \(70\,\%\), and is then fully discharged. Under the charging-time and pack-SOC constraints, the proposed strategy allocates different charging amounts to different cells. The key difference between the two cases is the available charging time: Fig.~\ref{onecycle}(a) uses a shorter charging window than Fig.~\ref{onecycle}(b). As shown, cells with higher SOH are charged more than those with lower SOH, which promotes long-term SOH balancing. Moreover, the remaining capacity of all cells is balanced within the first 25\% of the discharge, indicating that the strategy does not reduce the usable energy (and thus range) in this example.

\begin{figure}[htbp]
\centering
\begin{subfigure}{0.48\textwidth}
\includegraphics[width=\textwidth]{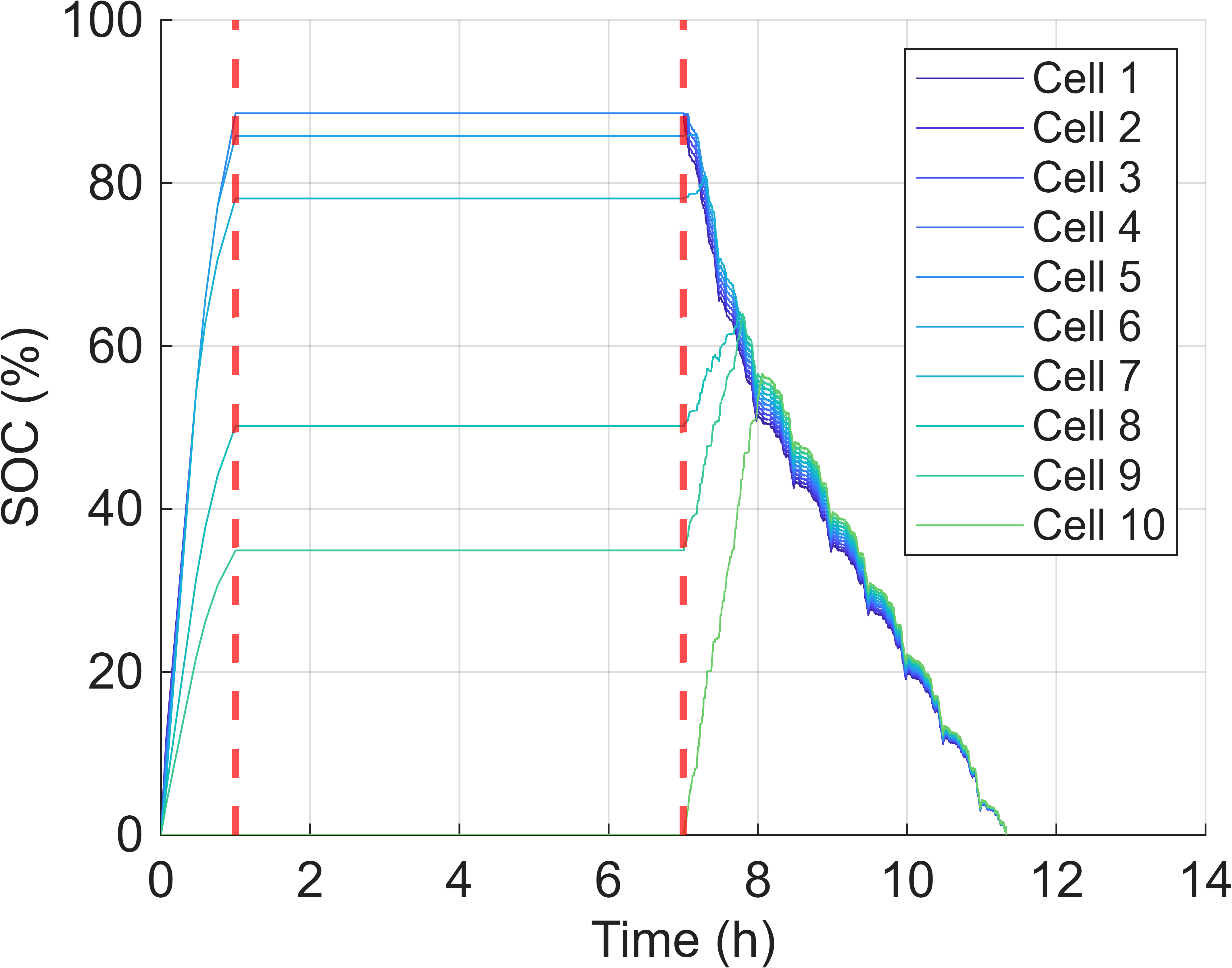}
\subcaption{1 h charging + 6 h rest + discharging.}
\end{subfigure}
\begin{subfigure}{0.48\textwidth}
\includegraphics[width=\textwidth]{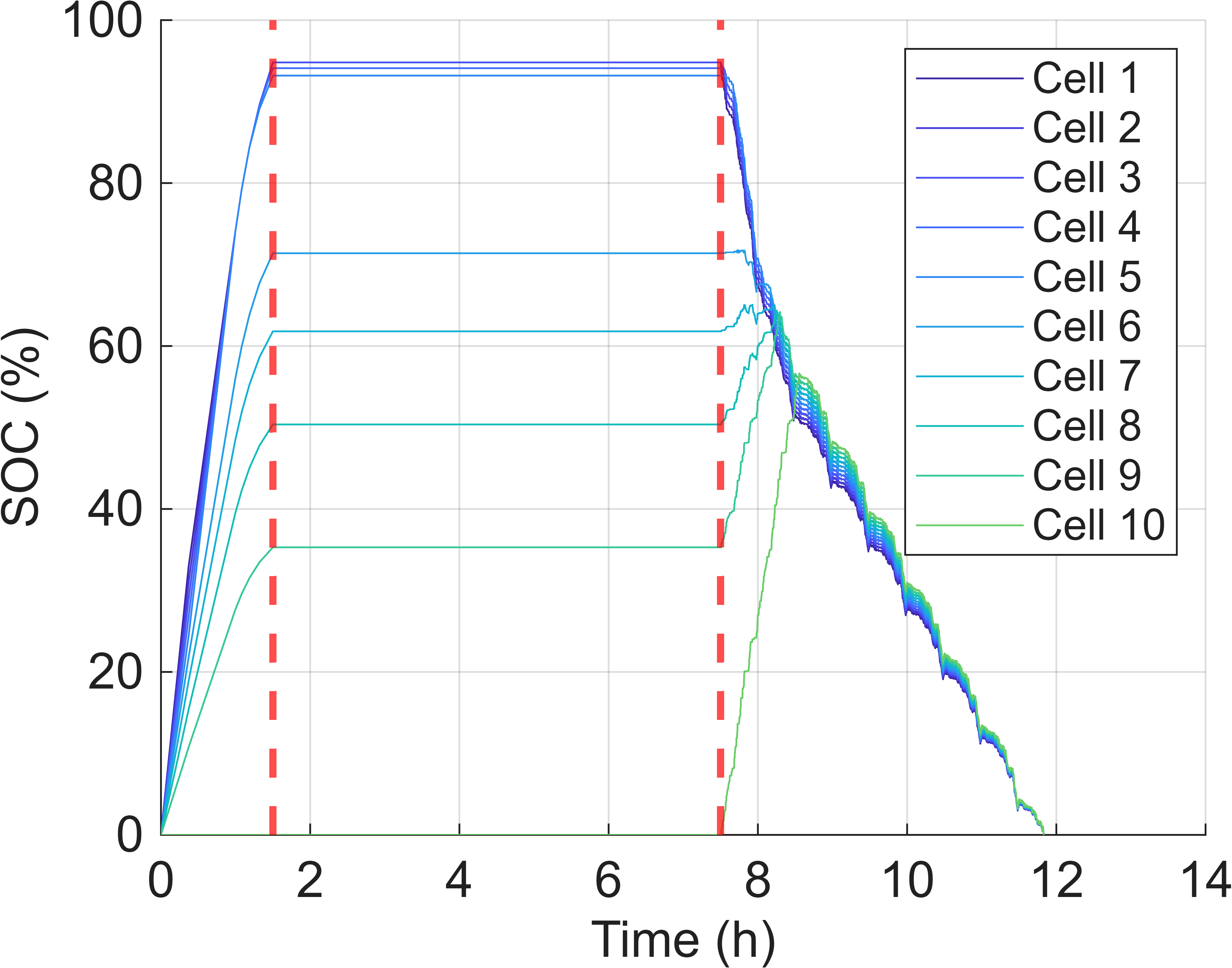}
\subcaption{1.5 h charging + 6 h rest + discharging.}
\end{subfigure}
\caption{Examples of the proposed charging \& discharging strategy. Cells with higher SOH are used more frequently than those with lower SOH, thereby balancing SOH and prolonging the battery pack's lifetime.}
\label{onecycle}
\end{figure}

We also observe that the SOC variance across cells at the end of charging is larger when the charging time is longer. This occurs because the CV stage is relatively slow: with a longer charging window, the controller can spend more time in the CV regime and push healthier cells to higher SOC while bypassing less healthy cells. Since the target pack SOC is fixed, this increases the cell-to-cell SOC dispersion and, in turn, enables more aggressive SOH balancing when more charging time is available.

Another observation from Fig.~\ref{onecycle} is that several cells share identical SOC trajectories during charging. This is a consequence of the CV voltage limit: the pack charging current is constrained by the cell that reaches the voltage limit first (typically the cell with the highest SOC), which encourages the controller to equalize SOC among a subset of cells to increase the allowable current and accelerate charging. In contrast, during discharge, the strategy balances remaining capacity, so cells tend to maintain similar remaining capacities rather than identical SOC values.

In the long-term aging simulations, the baseline controller is the SOC-balancing strategy (Strategy \ref{greedy3} in Section \ref{section_greedy}), which equalizes cell SOC by assigning cells with higher SOC to lower voltage levels during charging and higher voltage levels during discharging. We do not use SOH balancing as a baseline because it cannot fully utilize the battery pack’s energy, thereby rendering some discharge profiles infeasible. For a fair comparison, both the baseline and the proposed strategy are evaluated on the same cell-level inverter under identical specifications. Since the proposed strategy relies on per-cell SOH information and practical SOH estimators are imperfect, we also simulate a noisy-SOH case in which each cell's SOH estimate is corrupted by additive zero-mean Gaussian noise with a standard deviation of 2\,\%. The results for LFP and LMO chemistries are shown in Fig.~\ref{result_all}(a) and (b), respectively. Overall, the proposed strategy improves SOH balance across cells and extends pack lifetime by approximately 11--18\,\% relative to SOC balancing, even under noisy SOH estimates.

\begin{figure}[htbp]
    \centering
    %\begin{minipage}[htbp]{\textwidth}
        %\centering
        \begin{subfigure}[htbp]{\textwidth}
            \centering
            \includegraphics[width=0.95\textwidth]{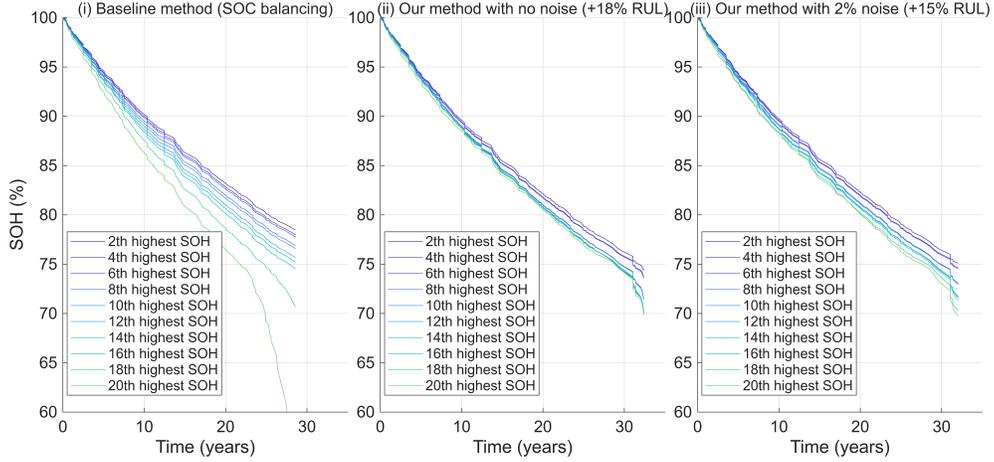} % Replace with your image file
            \caption{LFP battery}
        \end{subfigure}
        
        \vspace{0.4cm} % Adjust space between subfigures if needed
        
        \begin{subfigure}[htbp]{\textwidth}
            \centering
            \includegraphics[width=0.95\textwidth]{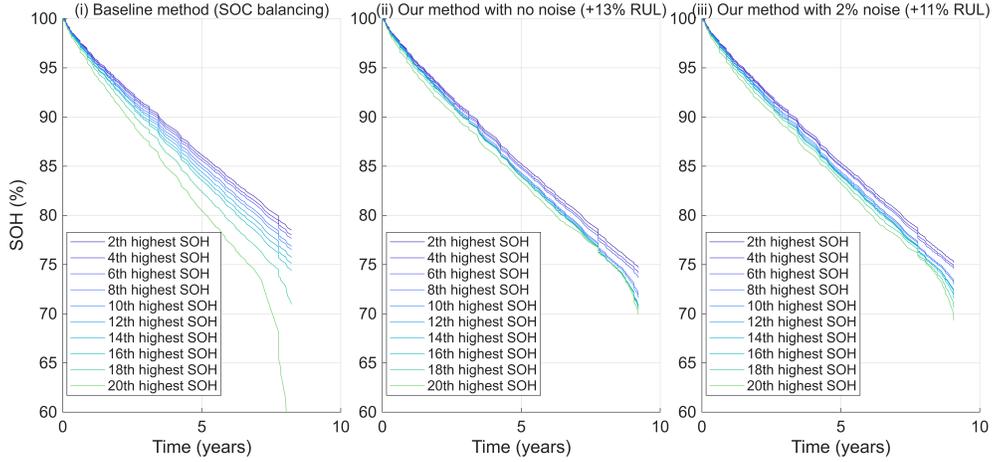} % Replace with your image file
            \caption{LMO battery}
        \end{subfigure}
    %\end{minipage}
    \caption{Comparison of different strategies in the long-term aging simulation for (a) LFP and (b) LMO battery chemistries. The proposed SOC–SOH-aware control strategy increases pack lifetime by up to 18\% and 13\% for LFP and LMO, respectively, relative to SOC balancing.}
    \label{result_all}
\end{figure}
\subsection{Sensitivity analysis}\label{sec_sensitive}
To further assess robustness, we repeat the long-term aging simulations under thirteen additional scenarios. The results are summarized in Table~\ref{analysis}, showing lifetime improvements of 7--38\,\% across a wide range of use cases and model variations.

\begin{table}[htbp]\caption{Battery lifetime improvement of the proposed SOC–SOH-aware control strategy relative to SOC-only balancing (Strategy \ref{greedy3} in Section \ref{section_greedy}) across operating conditions. The ``w/ noise'' column represents the case in which each cell's SOH estimate is corrupted by additive zero-mean Gaussian noise with a standard deviation of 2\,\%. Average changes with respect to the default scenario are presented in the final column.}\label{analysis} \tiny
\centering
\begin{tabular}{|c|c|cc|cc|c|}
\hline
 &  & \multicolumn{2}{c|}{\textbf{LFP}} & \multicolumn{2}{c|}{\textbf{LMO}} &  \\ \cline{3-6}
 & \multirow{-2}{*}{\textbf{Scenarios}} & \multicolumn{1}{c|}{w/o noise} & w/ noise & \multicolumn{1}{c|}{w/o error} & w/ error & \multirow{-2}{*}{\begin{tabular}[c]{@{}c@{}}\textbf{Average}\\ \textbf{effect}\end{tabular}} \\ \cline{2-7}
\multirow{-3}{*}{} & Default & \multicolumn{1}{c|}{18.1\,\%} & 14.9\,\% & \multicolumn{1}{c|}{12.8\,\%} & 10.9\,\% & \boldmath{$0.0\,\%$} \\ \hline

 & Doubled calendar aging & \multicolumn{1}{c|}{7.5\,\%} & 7.5\,\% & \multicolumn{1}{c|}{8.5\,\%} & 8.5\,\% & \boldmath{$-6.2\,\%$} \\ \cline{2-7}

 & Doubled cyclic aging & \multicolumn{1}{c|}{18.2\,\%} & 16.1\,\% & \multicolumn{1}{c|}{11.5\,\%} & 10.4\,\% & \boldmath{$-0.1\,\%$} \\ \cline{2-7}

 & \begin{tabular}[c]{@{}c@{}}Earlier knee point\\ ($SOH_{knee}=85\,\%$)\end{tabular} & \multicolumn{1}{c|}{18.5\,\%} & 14.5\,\% & \multicolumn{1}{c|}{11.1\,\%} & 7.0\,\% & \boldmath{$-1.4\,\%$} \\ \cline{2-7}

\multirow{-5}{*}{\begin{tabular}[c]{@{}c@{}}Different\\ aging\\ models\end{tabular}} & \begin{tabular}[c]{@{}c@{}}No knee point\\ ($SOH_{knee}=0$)\end{tabular} & \multicolumn{1}{c|}{16.6\,\%} & 12.7\,\% & \multicolumn{1}{c|}{13.1\,\%} & 11.3\,\% & \boldmath{$-0.8\,\%$} \\ \hline

 & \begin{tabular}[c]{@{}c@{}}More cells (40 cells)\end{tabular} & \multicolumn{1}{c|}{18.3\,\%} & 17.8\,\% & \multicolumn{1}{c|}{12.8\,\%} & 12.1\,\% & \boldmath{$+1.1\,\%$} \\ \cline{2-7}

 & \begin{tabular}[c]{@{}c@{}}More diverse cells\\ (standard deviation of \\ $\gamma_i$ increased to 0.15)\end{tabular} & \multicolumn{1}{c|}{21.3\,\%} & 15.9\,\% & \multicolumn{1}{c|}{13.8\,\%} & 11.1\,\% & \boldmath{$+1.4\,\%$} \\ \cline{2-7}

 & \begin{tabular}[c]{@{}c@{}}Larger internal resistance\\ ($R_0$ increased to $0.1\,\Omega$)\end{tabular} & \multicolumn{1}{c|}{18.1\,\%} & 14.9\,\% & \multicolumn{1}{c|}{12.8\,\%} & 10.9\,\% & \boldmath{$0.0\,\%$} \\ \cline{2-7}

 & \begin{tabular}[c]{@{}c@{}}More fast charging sessions\\ (proportion of fast charging\\ increased to 50\,\%)\end{tabular} & \multicolumn{1}{c|}{12.1\,\%} & 10.1\,\% & \multicolumn{1}{c|}{8.7\,\%} & 8.0\,\% & \boldmath{$-4.5\,\%$} \\ \cline{2-7}

 & \begin{tabular}[c]{@{}c@{}}Higher temperature\\ (average charge/discharge\\ temperature increased \\ to $45\,^{\circ}\text{C}$)\end{tabular} & \multicolumn{1}{c|}{18.6\,\%} & 13.2\,\% & \multicolumn{1}{c|}{12.4\,\%} & 10.9\,\% & \boldmath{$-0.4\,\%$} \\ \cline{2-7}

\multirow{-19}{*}{\begin{tabular}[c]{@{}c@{}}Different \\ use \\ cases\end{tabular}} & \begin{tabular}[c]{@{}c@{}}More diverse temperature\\ (standard deviation \\ increased to $4\,^{\circ}\text{C}$)\end{tabular} & \multicolumn{1}{c|}{18.1\,\%} & 14.9\,\% & \multicolumn{1}{c|}{12.8\,\%} & 10.9\,\% & \boldmath{$0.0\,\%$} \\ \hline

 & $SOH_\mathrm{pack}=80\,\%$ & \multicolumn{1}{c|}{38.0\,\%} & 32.6\,\% & \multicolumn{1}{c|}{22.4\,\%} & 17.9\,\% & \boldmath{$+13.6\,\%$} \\ \cline{2-7}

 & $SOH_\mathrm{pack}=75\,\%$ & \multicolumn{1}{c|}{21.4\,\%} & 16.3\,\% & \multicolumn{1}{c|}{15.1\,\%} & 13.1\,\% & \boldmath{$+2.3\,\%$} \\ \cline{2-7}

\multirow{-3}{*}{\begin{tabular}[c]{@{}c@{}}Different\\ definitions\\ of EOL\end{tabular}} & $SOH_\mathrm{pack}=65\,\%$ & \multicolumn{1}{c|}{15.1\,\%} & 14.1\,\% & \multicolumn{1}{c|}{12.5\,\%} & 8.1\,\% & \boldmath{$-1.7\,\%$} \\ \hline
\end{tabular}
\end{table}

According to Table~\ref{analysis}, the scenarios that most noticeably reduce the benefit of the proposed strategy are increasing the proportion of DC fast charging to 50\,\% and doubling the calendar-aging capacity loss. The former is expected because the simulation assumes little or no SOH balancing during fast charging in order to prioritize charging speed. The latter is also expected because the proposed strategy is optimization-based, and its objective in \eqref{cf} is influenced more directly by cycling-induced degradation than by calendar aging.

In contrast, the scenario that most notably increases the benefit of the proposed strategy is setting the battery-pack EOL threshold to \(SOH_\mathrm{pack}=80\,\%\). The reason is as follows. As explained in Section~\ref{end_pack}, cells with \(SOH<SOH_\mathrm{pack}\) can no longer be used in the pack; therefore, the pack SOH drops abruptly whenever a cell reaches \(SOH_\mathrm{pack}\). When \(SOH_\mathrm{pack}=70\,\%\), the pack typically reaches EOL when approximately 10\,\% of the cells have reached \(SOH_\mathrm{pack}\). By contrast, when \(SOH_\mathrm{pack}=80\,\%\), the pack typically reaches EOL when only about 5\,\% of the cells have reached \(SOH_\mathrm{pack}\). Under the proposed strategy, the cell SOHs remain relatively uniform, so the time at which 5\,\% of cells reach \(SOH_\mathrm{pack}\) is close to the time at which 10\,\% reach \(SOH_\mathrm{pack}\). Under the conventional strategy, however, this time gap can be much larger. Therefore, increasing \(SOH_\mathrm{pack}\) increases the benefit of the proposed strategy.

\section{Conclusions}
In this paper, we propose a SOC–SOH-aware control strategy to extend the range and service life of EV battery packs equipped with a cell-level inverter powertrain. We focus on applying the strategy during charging, particularly during AC charging events. The strategy does not require an explicit aging model and is robust to measurement noise and SOH estimation errors. We evaluate the strategy on two lithium-ion chemistries under a range of operating conditions. The results show that the proposed strategy can extend pack lifetime by 7--38\,\% relative to a commonly adopted SOC-balancing strategy, while maintaining comparable usable energy for subsequent driving.

One limitation of the proposed strategy is that its quantitative benefit depends on the underlying aging characteristics. As illustrated in Fig.~\ref{onecycle}, SOH balancing is achieved by charging different cells to different SOC levels. If the degradation rate is strongly convex in SOC over the relevant operating range, increasing SOC dispersion can raise the average degradation rate (all else equal), thereby reducing the net benefit of SOH balancing. This limitation could be mitigated by incorporating available aging-model knowledge (or learned degradation surrogates) into the charging optimization, which is a promising direction for future work. 
%Future studies may also extend the strategy to other powertrain topologies, such as \cite{topofuture}, which supports both series and parallel connections.
\ifblindreview
\section*{CRediT authorship contribution statement}
CRediT authorship contribution statement is withheld for double-blind review.
\else
  \section*{CRediT authorship contribution statement}
\textbf{Shida Jiang:} Conceptualization, Investigation, Formal analysis, Visualization, Methodology, Validation, Writing – original draft, Writing – review \& editing. \textbf{Shengyu Tao:} Formal analysis, Visualization, Writing – review \& editing.\textbf{Vincent Molina:} Data curation, Supervision. \textbf{Junzhe Shi:} Methodology, Writing – review \& editing. \textbf{Scott Moura:} Conceptualization, Data curation, Funding acquisition, Project administration, Writing – review \& editing.
\fi

\section*{Declaration of competing interest}
The authors declare that they have no known competing financial interests or personal relationships that could have appeared to influence the work reported in this paper.

\ifblindreview
  \section*{Acknowledgement}
  Acknowledgements are withheld for double-blind review.
\else
  \section*{Acknowledgement}
  This work was funded by BMW USA. The authors also thank Markus Anton Preisinger and Hadzalic Nejira, who were both interns at BMW, for their support and valuable feedback during the course of this project.
\fi

\section*{Data availability}
Data will be made available on request.

%% The Appendices part is started with the command \appendix;
%% appendix sections are then done as normal sections
%% \appendix

%% \section{}
%% \label{}

%% If you have bibdatabase file and want bibtex to generate the
%% bibitems, please use
%%
\bibliographystyle{elsarticle-num} 
\bibliography{elsarticle-template-num}

%% else use the following coding to input the bibitems directly in the
%% TeX file.

%\begin{thebibliography}{00}

%% \bibitem{label}
%% Text of bibliographic item

%\bibitem{}

%\end{thebibliography}

%\section{Confidence interval of the SOC and SOH estimation}

\end{document}
\endinput
%%
%% End of file `elsarticle-template-num.tex'.